\documentclass[sigconf, nonacm]{acmart}

\setcopyright{none}
\renewcommand\footnotetextcopyrightpermission[1]{}
\AtBeginDocument{\fancyhead[RE,RO]{Mang, Xiang, et al.}\fancyhead[LE,LO]{PLOP: Cost-Based Placement of Semantic Operators in Hybrid Query Plans}\fancyfoot[C]{\thepage}\fancyfoot[R]{\small\textit{arXiv preprint, 2026}}\setlength{\footskip}{24pt}}

\AtBeginDocument{%
  \providecommand\BibTeX{{%
    \normalfont B\kern-0.5em{\scshape i\kern-0.25em b}\kern-0.8em\TeX}}}

\usepackage[scaled=.8]{beramono}
\usepackage{booktabs}
\usepackage{color}
\usepackage{colortbl}
\usepackage{tabularray}
\usepackage{float}

\definecolor{sky}{RGB}{0,70,140}

\newcommand{\sky}[1]{}

\newif\ifcomments
\commentsfalse
\ifcomments
    \providecommand{\alvin}[1]{{\color{brown}{alvin: #1 }}}
\else
    \providecommand{\alvin}[1]{}
\fi

\usepackage{pgfplots,xcolor,tikz}
\usetikzlibrary{calc,decorations.pathreplacing}
\pgfplotsset{compat=1.18}

\definecolor{colNone}{RGB}{156,186,156}      
\definecolor{colCost}{HTML}{89CCEB}          
\definecolor{colPull}{HTML}{FFA07A}          
\definecolor{colThal}{RGB}{60,100,160}       
\definecolor{colFlock}{RGB}{160,60,60}       
\definecolor{colPalm}{RGB}{60,130,80}        

\pgfplotsset{
  FIGALL/.style={
    ybar=1pt,
    enlarge x limits=0.035,
    x=1.7cm,                    
    width=\textwidth,
    height=4.2cm,
    tick label style={font=\large},
    label style={font=\large},
    title style={font=\Large\bfseries},
    xticklabel style={font=\large, rotate=55, anchor=east},
    yticklabel style={font=\large, text width=2.5em, align=right},
    ylabel style={at={(axis description cs:-0.03,0.5)}},
    bar width=4pt,              
    ymajorgrids=true,
    grid style={draw=gray!30},
    xtick=data,
    minor x tick num=0,
    legend style={
      font=\large\bfseries,
      at={(0.5,1.02)},
      anchor=south,
      legend columns=6,
      column sep=12pt,
      row sep=4pt,
      draw=none,
      /tikz/every even column/.append style={column sep=12pt},
    },
  },
  FIGALLbot/.style={
    FIGALL,
    legend style={opacity=0, text opacity=0, draw=none, fill=none, at={(10,10)}},
  },
}

\usepackage{makecell}
\usepackage{soul}
\usepackage{bbm}
\usepackage{amsthm}
\usepackage{balance}
\usepackage{multirow}
\usepackage[ruled,linesnumbered,vlined]{algorithm2e}
\usepackage{tcolorbox}
\usepackage[noend]{algpseudocode}
\usepackage[flushleft]{threeparttable}
\usepackage{fbox}
\usepackage{xcolor}
\usepackage{wasysym}
\usepackage{booktabs}
\usepackage{verbatim}
\usepackage{pgfplots}
\usepackage{hyperref}
\usepackage{fontawesome}
\usepackage{tikz}
\usepackage{graphics}
\usepackage{multirow}
\usepackage{fancybox}
\usepackage{listings}
\usepackage{enumitem}
\usetikzlibrary{patterns}
\usepackage{amsmath, amsfonts}
\usepackage{graphicx}
\usepackage{subfigure}
\usepackage{textcomp}
\usepackage{xcolor}
\usepackage{cleveref}

\theoremstyle{definition}
\newtheorem{problem}{Problem}
\theoremstyle{plain}

\newtheorem{mythm}{Theorem}[section]

\newcommand{\SF}{\mathsf{SF}}

\newcommand{\SJ}{\mathsf{SJ}}
\newcommand{\SP}{\mathsf{SP}}
\definecolor{add}{rgb}{0.2,0.4,0.6}

\crefname{section}{Sec.}{Secs.}
\Crefname{section}{Sec.}{Secs.}
\crefname{figure}{Fig.}{Figs.}
\Crefname{figure}{Fig.}{Figs.}
\crefname{algorithm}{Alg.}{Algs.}
\Crefname{algorithm}{Alg.}{Algs.}

\begin{document}

\title{PLOP: Cost-Based Placement of Semantic Operators in Hybrid Query Plans}

\renewcommand{\shortauthors}{Mang, Xiang, et al.}
\renewcommand\authorsaddresses{}

\author{Qiuyang Mang\textsuperscript{1,*} \quad
Yufan Xiang\textsuperscript{2,*} \quad
Hangrui Zhou\textsuperscript{1} \quad
Runyuan He\textsuperscript{1} \quad
Jiaxiang Yu\textsuperscript{1} \quad
Hanchen Li\textsuperscript{1} \quad
Aditya Parameswaran\textsuperscript{1} \quad
Alvin Cheung\textsuperscript{1}}
\affiliation{\institution{\textsuperscript{1}UC Berkeley \quad \textsuperscript{2}University of Wisconsin-Madison}\country{}}

\titlenote{*Equal contribution.}




\newcommand{\todo}[1]{}
\newcommand{\lhc}[1]{}
\newcommand{\jerry}[1]{}

\begin{abstract}
Recent database systems have introduced semantic operators that leverage large language models (LLMs) to filter, join, and project over structured data using natural language predicates.
These operators are often combined with traditional relational operators to produce hybrid query plans whose execution cost depends on both expensive LLM calls and conventional database processing.
A key optimization question is where to place each semantic operator relative to the relational operators in the plan: placing them early reduces the data that subsequent operators process, but requires more LLM calls; placing them late reduces LLM calls through deduplication, but forces relational operators to process larger intermediate data.
Existing systems either ignore this placement question or apply simple heuristics without considering the full cost trade-off.

We present \textsc{PLOP}, a plan-level optimizer for hybrid semantic-relational queries.
\textsc{PLOP} reduces hybrid query planning to semantic filter placement via two equivalence-preserving rewrites.
We prove that deferring all semantic filters to the latest possible position minimizes LLM invocations under function caching, but this heuristic can cause relational processing cost to dominate on complex multi-table queries.
To balance LLM cost against relational cost, \textsc{PLOP} uses a dynamic-programming-based cost model that finds the placement of semantic operators minimizing their weighted sum.
On 44 semantic SQL queries across five schemas and two benchmarks, \textsc{PLOP} achieves up to $1.5\times$ speedup and $4.29\times$ cost reduction while maintaining an average F1 of $0.85$ against an unoptimized baseline and $0.84$ against human-annotated ground truth.
Overall, \textsc{PLOP} achieves significant cost reduction while preserving the highest accuracy among six publicly available systems with similar functioning.
\end{abstract}

\maketitle

\section{Introduction}
\label{sec:intro}

Recent advances in large language models (LLMs) have opened up new opportunities for data analysis.
In particular, \emph{semantic queries} have gained growing attention, as they allow users to express data operations in natural language.
For instance, users can filter records with a predicate such as \emph{whether a movie review is positive}, or join two tables on a condition such as \emph{whether a customer is likely to buy a product}.
These semantic operations are powered by LLM calls and can be freely composed with traditional relational operators such as filters, joins, projections, and aggregations, forming \emph{hybrid query plans}. 

\begin{figure}[t]
\vspace{10pt}
\begin{lstlisting}[
  language=SQL,
  basicstyle=\ttfamily\small,
  frame=tb,
  framerule=0.4pt,
  rulecolor=\color{black},
  breaklines=true,
  showstringspaces=false,
  morekeywords={SEMANTIC},
  keywordstyle=\bfseries,
  stringstyle=\itshape,
  numbers=left,
  numberstyle=\tiny\color{gray},
  numbersep=6pt,
  xleftmargin=14pt,
  xrightmargin=2pt,
  aboveskip=4pt,
  belowskip=4pt,
  captionpos=b,
  abovecaptionskip=8pt,
  caption={Motivating semantic SQL query over a book review database.},
  label=lst:motivating,
  escapeinside={(*}{*)}
]
SELECT b.title, r.text
FROM books b
  JOIN reviews r ON b.book_id = r.book_id
WHERE SEMANTIC('{b.description} is about AI?') (*\hspace{6ex}//$\phi_1$*)
  AND SEMANTIC('{r.text} is a positive review?') (*\hspace{3.5ex}//$\phi_2$*)
  AND r.rating >= 3;
\end{lstlisting}
\vspace{-0.2in}
\end{figure}

As a motivating example, consider the query in Listing~\ref{lst:motivating}, which finds books about artificial intelligence that received positive reviews.
The query contains two semantic filters, $\SF_{\phi_1}$ on books and $\SF_{\phi_2}$ on reviews, an equi-join on \texttt{book\_id}, and a relational filter \texttt{rating} $\geq 3$.
Although semantic queries are powerful when combined with traditional relational operators, each semantic operator can require many LLM calls, making execution optimization crucial for both cost and latency.

\begin{figure*}[t]
\centering
\textit{``Find books about AI that received positive reviews with rating $\geq$ 3.''}
\vspace{4pt}

\resizebox{0.95\linewidth}{!}{
\begin{tikzpicture}[
  op/.style={draw, rounded corners=2pt, font=\small, inner sep=2pt, minimum width=1.1cm, align=center},
  sem/.style={op, fill=colCost!25},
  tbl/.style={draw, fill=gray!12, font=\small, inner sep=2pt, align=center},
  lbl/.style={font=\small\bfseries},
  ann/.style={font=\footnotesize, align=center},
  every edge/.style={draw, -latex, thick},
]

\node[lbl] at (0, 2.9) {(a) Push-down};
\node[op]  (A-join) at (0, 2.2) {$\Join_{\texttt{book\_id}}$};
\node[sem] (A-sf1)  at (-1.3, 1.3) {$\SF_{\phi_1}$};
\node[sem] (A-sf2)  at (1.3, 1.3)  {$\SF_{\phi_2}$};
\node[tbl] (A-bk)   at (-1.3, 0.5) {\texttt{books}};
\node[op]  (A-sig)  at (1.3, 0.5)  {$\sigma_{\texttt{rat}\geq 3}$};
\node[tbl] (A-rv)   at (1.3, -0.2) {\texttt{reviews}};
\node[ann] at (-1.3, 0.1) {\textit{1K rows}};
\node[ann] at (1.3, -0.55) {\textit{5K rows}};

\draw[-latex] (A-bk)  -- (A-sf1);
\draw[-latex] (A-sf1) -- (A-join);
\draw[-latex] (A-rv)  -- (A-sig);
\draw[-latex] (A-sig) -- (A-sf2);
\draw[-latex] (A-sf2) -- (A-join);

\node[ann] at (0, -0.95) {4{,}000 LLM calls\\[-1pt] Small join input};

\node[lbl] at (5.2, 2.9) {(b) Pull-up};
\node[sem] (B-sf1)  at (5.2, 2.2) {$\SF_{\phi_1}$ \tiny[cached]};
\node[sem] (B-sf2)  at (5.2, 1.4) {$\SF_{\phi_2}$ \tiny[cached]};
\node[op]  (B-join) at (5.2, 0.6) {$\Join_{\texttt{book\_id}}$};
\node[tbl] (B-bk)   at (3.9, -0.1) {\texttt{books}};
\node[op]  (B-sig)  at (6.5, -0.1) {$\sigma_{\texttt{rat}\geq 3}$};
\node[tbl] (B-rv)   at (6.5, -0.8) {\texttt{reviews}};
\node[ann] at (3.9, -0.45) {\textit{1K rows}};
\node[ann] at (6.5, -1.15) {\textit{5K rows}};

\draw[-latex] (B-bk)   -- (B-join);
\draw[-latex] (B-rv)   -- (B-sig);
\draw[-latex] (B-sig)  -- (B-join);
\draw[-latex] (B-join)  -- (B-sf2);
\draw[-latex] (B-sf2)  -- (B-sf1);

\node[ann] at (5.2, -1.55) {3{,}300 LLM calls\\[-1pt] Large join input};

\node[lbl] at (10.5, 2.9) {(c) \textsc{PLOP}};
\node[sem] (C-sf2)  at (10.5, 2.2) {$\SF_{\phi_2}$ \tiny[cached]};
\node[op]  (C-join) at (10.5, 1.4) {$\Join_{\texttt{book\_id}}$};
\node[sem] (C-sf1)  at (9.2, 0.5) {$\SF_{\phi_1}$};
\node[op]  (C-sig)  at (11.8, 0.5) {$\sigma_{\texttt{rat}\geq 3}$};
\node[tbl] (C-bk)   at (9.2, -0.2) {\texttt{books}};
\node[tbl] (C-rv)   at (11.8, -0.2) {\texttt{reviews}};
\node[ann] at (9.2, -0.55) {\textit{1K rows}};
\node[ann] at (11.8, -0.55) {\textit{5K rows}};

\draw[-latex] (C-bk)   -- (C-sf1);
\draw[-latex] (C-sf1)  -- (C-join);
\draw[-latex] (C-rv)   -- (C-sig);
\draw[-latex] (C-sig)  -- (C-join);
\draw[-latex] (C-join)  -- (C-sf2);

\node[ann] at (10.5, -0.95) {Balanced cost};

\end{tikzpicture}
}
\caption{Three hybrid query plans for Listing~\ref{lst:motivating} (\texttt{books}: 1K rows, \texttt{reviews}: 5K rows).
(a)~\textbf{Push-down}: filters applied before the join. Small join, but 4K LLM calls.
(b)~\textbf{Pull-up}: filters deferred after the join. Function caching evaluates each distinct tuple once, yielding only 3.3K calls, but the join processes unfiltered tables.
(c)~\textbf{\textsc{PLOP}}: $\phi_1$ pushed down to shrink the join, $\phi_2$ pulled up to reduce LLM calls.
}
\Description{Three hybrid query plan trees showing push-down, pull-up, and PLOP placement strategies.}
\label{fig:plans}
\vspace{-0.2in}
\end{figure*}
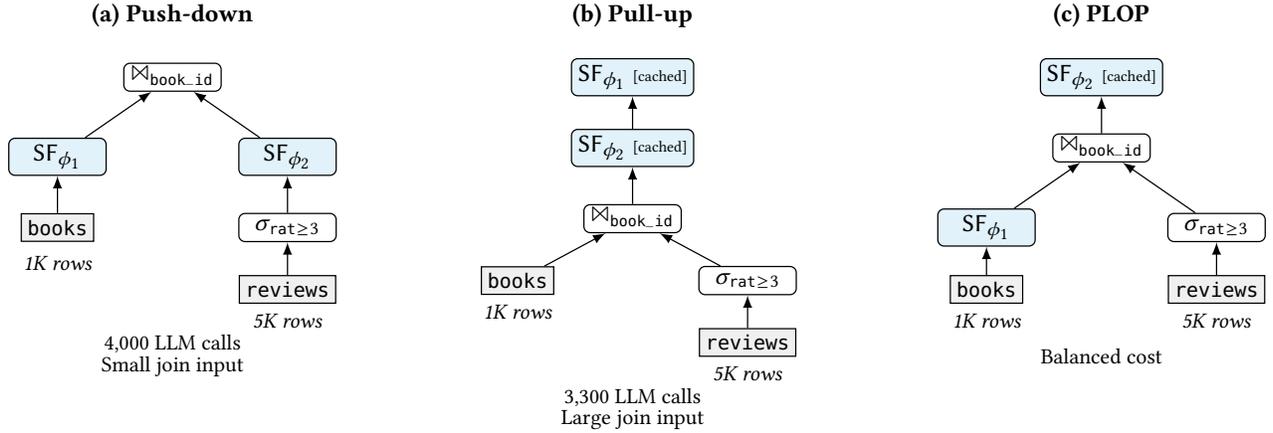

Several systems have been proposed to reduce costs and optimize semantic query execution.
Lotus~\cite{patel2025semantic} uses model cascades~\cite{kang2020approximate}, where a proxy model handles easy cases, and defers to a model for harder ones.
Palimpzest~\cite{liu2025palimpzest} and Abacus~\cite{russo2025abacuscostbasedoptimizersemantic} search for optimal physical implementations of individual operators using a cost model.
DocETL~\cite{shankar2024docetl} uses semantically equivalent rewrites to optimize queries, while ZenDB~\cite{lin2025zendb} uses document structure to reduce query plan costs.
Other systems extend SQL with semantic operators and contribute their own optimization strategies, including FlockMTL~\cite{dorbani2025beyond}, ThalamusDB~\cite{jo2024thalamusdb}, iPDB~\cite{kumarasinghe2026ipdb}, Sema~\cite{qi2026sema}, and Cortex AISQL~\cite{liskowski2025cortex}.
While prior work primarily focuses on optimizing semantic operators, query plans involve both semantic and relational operators, requiring joint optimization and careful consideration of accuracy due to the following two key limitations.

\begin{itemize}[leftmargin=*, itemsep=0.5ex]
    \item \textbf{Ignoring semantic-relational interactions.} Existing systems optimize semantic operators in isolation without considering how their placement interacts with relational operators. As illustrated in \Cref{fig:plans}, where a semantic filter is placed relative to joins and relational filters can dramatically affect both LLM cost and relational execution cost. ZenDB~\cite{lin2025zendb}, iPDB~\cite{kumarasinghe2026ipdb}, and Cortex AISQL~\cite{liskowski2025cortex} apply pull-up of semantic filters as a heuristic, but pulling up is not always beneficial: it can inflate intermediate results and cause relational costs to explode, especially in multi-table queries~\cite{liskowski2025cortex}. No existing system reasons about the trade-off between LLM cost and relational cost when deciding where to place semantic operators.
    \item \textbf{Accuracy risk from approximation.} Most existing systems reduce LLM cost by approximating or modifying individual operators. Model cascading routes easy cases to a cheaper proxy model, and approximate query processing replaces exact LLM evaluation with sampling-based estimates. While effective in isolation, these approximations alter operator outputs and can compound when multiple operators are composed in the same plan, causing significant deviations in the final result.
\end{itemize}

\noindent This motivates the problem we address: \emph{given a hybrid query plan that interleaves semantic and relational operators, where should each semantic operator be placed to minimize the combined LLM and relational execution cost, without changing the semantics or implementation of individual operators?} 

To address this problem, we present \textsc{PLOP}, a plan-level optimization framework for hybrid semantic-relational queries that optimizes the \emph{placement} of semantic operators within the plan tree.
We begin by reducing the general problem of semantic operator placement to semantic filter placement via two equivalence-preserving reductions. 
By doing so, all remaining semantic operators are semantic filters, and the reduced problem can be optimized based on two key insights:

\noindent \textbf{(1)} We show that pulling up all semantic filters above relational operators, combined with function caching~\cite{hellerstein1993predicate} (i.e., caching semantic operator calls so that each distinct prompt is evaluated only once by LLM), minimizes the total number of LLM invocations.
Consider the example in Listing~\ref{lst:motivating} and the three plans in \Cref{fig:plans}.
Suppose \texttt{books} has 1{,}000 rows and \texttt{reviews} has 5{,}000 rows.
The push-down plan (\Cref{fig:plans}a) evaluates $\SF_{\phi_1}$ on all 1{,}000 books and $\SF_{\phi_2}$ on the 3{,}000 reviews that pass the relational filter, totaling 4{,}000 LLM calls.
In the pull-up plan shown in \Cref{fig:plans}b, the relational filter retains 3{,}000 reviews with \texttt{rating} $\geq 3$, and the equi-join matches them with books.
Assuming the join is selective, and that only 800 of the 1{,}000 books appear in the join output, along with 2{,}500 distinct reviews.
By evaluating each distinct prompt once in LLM, yielding only $800 + 2{,}500 = 3{,}300$ LLM calls instead of 4{,}000. 

\noindent \textbf{(2)} We observe that this strategy of minimizing LLM calls is not always globally optimal.
For example, in production workloads at Snowflake, nearly 40\% of semantic queries involve multiple tables~\cite{liskowski2025cortex}.
Queries with 5+ joins and multiple semantic filters are common in analytical scenarios.
Consider a chain of five equi-joins over five tables, each with 1{,}000 rows and each carrying its own semantic filter with 10\% selectivity.
Pushing down all filters first shrinks each table to about 100 rows, so the subsequent joins operate on small inputs.
Pulling up all filters instead forces each join to process the full 1{,}000-row tables.
Since intermediate sizes compound across multiple joins, the gap between the two strategies widens rapidly with the number of tables.
Although pulling up still reduces LLM calls as shown above, the relational cost of joining unfiltered tables quickly dominates.
To navigate this trade-off, \textsc{PLOP} employs a dynamic programming (DP)-based cost model that jointly considers LLM and relational execution cost to determine the optimal placement of each semantic filter.

We evaluate \textsc{PLOP} on 44 semantic SQL queries spanning five database schemas from DataAgentBench~\cite{ma2026can}, TPC-H~\cite{tpch2024}, and SemBench~\cite{lao2026sembenchbenchmarksemanticquery}, comparing against six baseline methods.
\textsc{PLOP} achieves up to $1.5\times$ speedup and $4.3\times$ cost reduction over the DuckDB UDF~\cite{raasveldt2019duckdb} baseline while maintaining an average F1 score of $0.85$.
Competing systems either sacrifice accuracy for cost savings or achieve a limited reduction at moderate accuracy.
For example, ThalamusDB~\cite{jo2024thalamusdb} reduces cost by $10\times$ but at only with a F1 score of $0.52$
FlockMTL~\cite{dorbani2025beyond} achieves $7.3\times$ with a F1 score of 0.34.
Palimpzest~\cite{liu2025palimpzest} reaches $2.1\times$ cost reduction with a F1 score of $0.53$.
We also show that pull-up alone suffices for queries on 1 -- 2 tables, but can cause join costs to explode for complex multi-table queries.
Our ablation studies show that \textsc{PLOP}'s cost model can accommodate a wide range of relative costs between relational and semantic operators, while remaining robust to selectivity estimation errors and incurring only negligible optimization overhead.

\noindent In summary, we make the following contributions:
\vspace{-2ex}
\begin{itemize}[leftmargin=*, itemsep=0.5ex]
    \item We formalize the problem of optimizing hybrid query plans that interleave semantic and relational operators, and show how to reduce it to semantic filter placement by pulling up semantic projections and decomposing semantic joins.
    \item We prove that pulling up semantic filters with function caching minimizes LLM invocations, and propose a DP-based cost model that balances LLM and relational costs for cases where pull-up alone degrades overall performance.
    \item We benchmark 44 semantic SQL queries across five schemas and conduct an extensive evaluation against six baselines. The results show that \textsc{PLOP}'s cost model achieves the best trade-off between efficiency and query accuracy, reducing cost by up to 33$\times$ and latency by up to 3.0$\times$.
\end{itemize}

\section{Background}
\label{sec:background}

\subsection{Semantic Operators}
A semantic operator is a query operator parameterized by a natural language expression and evaluated by an LLM~\cite{patel2025semantic}.
We study three types of semantic operators: Semantic Filter ($\SF$), Semantic Join ($\SJ$), and Semantic Projection ($\SP$).
Let $R$ and $S$ be relations, $\phi$ a natural language expression, and $\mathcal{M}$ an LLM.

\noindent \textbf{Semantic Filter ($\SF$)} retains tuples in $R$ for which $\mathcal{M}$ evaluates predicate $\phi$ as true:
\[
\SF_{\phi}(R) = \{ r \in R \mid \mathcal{M}(r, \phi) = \text{true} \}.
\]
\textbf{Semantic Join ($\SJ$)} pairs tuples from $R$ and $S$ that satisfy a natural language condition $\phi$:
\[
\SJ_{\phi}(R, S) = \{ (r, s) \mid r \in R,\, s \in S,\, \mathcal{M}(r, s, \phi) = \text{true} \}.
\]
\textbf{Semantic Projection ($\SP$)} produces new attributes by transforming each tuple in $R$ according to $\phi$:
\[
\SP_{\phi}(R) = \{ \mathcal{M}(r, \phi) \mid r \in R \}.
\]

Under these definitions, $\SF$ and $\SP$ require one LLM call per input tuple, or per tuple pair for $\SJ$, making LLM invocation the dominant time cost in hybrid query execution.

\subsection{Hybrid Query Plans}
A hybrid query plan is a logical query plan that integrates both relational and semantic operators.
Formally, it is a rooted tree $\mathcal{T} = (V, E)$ where each node $v \in V$ is either a relational operator, such as filter $\sigma$, projection $\pi$, join $\Join$, or aggregation $\gamma$, or a semantic operator from $\{\SF, \SJ, \SP\}$.
Each node takes the output of its children as input and produces a relation as the output.
\Cref{fig:plans} shows three hybrid plans for the query in Listing~\ref{lst:motivating}.
Different placements of the same semantic operators lead to very different cost profiles: pushing down reduces join input but increases LLM calls, while pulling up reduces LLM calls but inflates the join.

\subsection{Function Caching}
\label{sec:function-caching}
\emph{Function caching} for expensive predicates was first proposed in the context of predicate migration~\cite{hellerstein1993predicate}.
When an expensive predicate depends only on columns from one side of a join, its result can be computed once per distinct input tuple and reused across all join-expanded rows.

We adopt this technique for semantic operators.
When a semantic filter $\SF_\phi$ depends only on columns from a base table $A$ and is placed above a join $A \Join B$, the same tuple $a$ may appear in multiple output rows paired with different tuples from $B$.
With function caching, the LLM evaluates each distinct $a$ at most once and reuses the cached result.
The number of LLM calls is therefore equal to the number of distinct $A$-tuples in the join output, which can be much smaller than $|A|$ if the join eliminates unmatched tuples.

This property is critical for optimization: pulling up a semantic filter above a join can reduce the number of LLM calls by skipping tuples that the join eliminates.
With function caching, the effective cost of a semantic filter depends on the number of \emph{distinct} relevant tuples in the intermediate result rather than the total number of tuples.
These cost characteristics motivate a formal optimization framework.

\section{Problem Specification}
\label{sec:problem}
We now formally define the hybrid plan optimization problem, then show how to simplify it by reducing the general semantic operator placement problem to semantic filter placement only.
\subsection{Problem Definition}
\label{sec:problem-def}

We formalize the hybrid plan optimization problem as follows.

\begin{problem}[Hybrid Plan Optimization]
\label{prob:main}
Given a hybrid query plan tree $\mathcal{T}$ and a user-specified parameter $\alpha > 0$, find the plan $\mathcal{T}^*$ that minimizes:
\[
C(\mathcal{T}^*) = C_{\mathrm{LLM}}(\mathcal{T}^*) + \alpha \cdot C_{\mathrm{rel}}(\mathcal{T}^*),
\]
where $C_{\mathrm{LLM}}$ is the total number of distinct rows processed by semantic operators under function caching, and $C_{\mathrm{rel}}$ is the total number of rows processed by relational operators.
The search space consists of all plans obtainable by equivalent repositioning and rewriting semantic operators in $\mathcal{T}$ that produce the same output. 
\end{problem}

Both costs are measured by the number of rows, since each semantic operator requires one LLM call per input row under the definition.
Here, $\alpha$ converts between the per-row cost of each type of operator.
In practice, $\alpha$ reflects the relative weight assigned to CPU-based relational processing versus LLM inference in the overall execution cost, where cost may capture both latency and monetary expense: a small $\alpha$ prioritizes reducing LLM-related cost, while a large $\alpha$ prioritizes reducing relational processing cost.

\subsection{Problem Simplification}
\label{sec:simplification}

Solving Problem~\ref{prob:main} in full generality requires jointly optimizing join ordering and semantic operator placement.
In this work, we fix the join order as given by the underlying optimizer and focus on semantic operator placement. 
We further simplify the problem through two reductions. 

\paragraph{Pulling up semantic projections.}
The first reduction is designed for semantic projections. 
Pulling up a semantic projection to the highest feasible position in the plan tree reduces LLM cost: with function caching, Semantic Projection ($\SP$) is evaluated only on distinct tuples, and placing it higher means the operators below reduce the number of distinct tuples that reach it.
Since $\SP$ does not change cardinality and only adds columns, this rewriting is semantics-preserving provided no intervening operator references the new columns.
Meanwhile, a semantic projection $\SP_\phi$ generates new columns that may be referenced by downstream operators.
When a relational operator references a column produced by $\SP$, the operator depends on $\SP$'s output. 
Therefore, if $\SP$ is pulled up, the dependent relational operator must move with it to preserve this dependency.
We determine the final position via a topological sort that respects these column dependencies.
To illustrate, consider the query in Listing~\ref{lst:sp-example} and the illustration of its transformation in \Cref{fig:sp-pullup}.
This query computes a sentiment score via $\SP$ and then filters on it.
To pull up the $\SP$ above the join, we first pull up the filter $\sigma_{\texttt{score}\geq 4}$ that refers to it.

\begin{figure}[t]
\begin{lstlisting}[
  language=SQL, basicstyle=\ttfamily\small,
  frame=tb, framerule=0.4pt, rulecolor=\color{black},
  breaklines=true, showstringspaces=false,
  morekeywords={SEMANTIC_INT}, keywordstyle=\bfseries,
  numbers=left, numberstyle=\tiny\color{gray},
  numbersep=6pt, xleftmargin=14pt, xrightmargin=2pt,
  aboveskip=4pt, belowskip=4pt, captionpos=b, abovecaptionskip=8pt,
  caption={\vspace{2pt}Semantic projection with a relational filter.},
  label=lst:sp-example,
]
SELECT b.title,
  SEMANTIC_INT('Rate {r.text} sentiment 1-5') AS score
FROM books b JOIN reviews r ON b.id = r.book_id
WHERE score >= 4;
\end{lstlisting}
\Description{SQL query with SEMANTIC_INT projection and a filter on the projected column.}
\end{figure}

\begin{figure}[t]
\centering
\resizebox{0.85\columnwidth}{!}{%
\begin{tikzpicture}[
  op/.style={draw, rounded corners=2pt, font=\small, inner sep=3pt, minimum width=1.3cm, align=center},
  sem/.style={op, fill=colCost!25},
  tbl/.style={draw, fill=gray!12, font=\small, inner sep=3pt, align=center},
  lbl/.style={font=\normalsize\bfseries},
  ann/.style={font=\footnotesize, align=center},
  arr/.style={-latex, thick},
]

\node[lbl] at (-1, 3.8) {(a) Before};
\node[op]  (A-join) at (-1, 3.0) {$\Join$};
\node[op]  (A-sig)  at (-2.2, 2.0) {$\sigma_{\texttt{score}\geq 4}$};
\node[tbl] (A-bk)   at (0.2, 2.0) {\texttt{books}};
\node[sem] (A-sp)   at (-2.2, 1.0) {$\SP$: \texttt{score}};
\node[tbl] (A-rv)   at (-2.2, 0.0) {\texttt{reviews}};

\draw[arr] (A-rv)  -- (A-sp);
\draw[arr] (A-sp)  -- (A-sig);
\draw[arr] (A-sig) -- (A-join);
\draw[arr] (A-bk)  -- (A-join);

\node[font=\Huge] at (1.5, 1.8) {$\Rightarrow$};

\node[lbl] at (3.8, 3.8) {(b) After};
\node[op]  (B-sig)  at (3.8, 3.0) {$\sigma_{\texttt{score}\geq 4}$};
\node[sem] (B-sp)   at (3.8, 2.1) {$\SP$: \texttt{score}};
\node[op]  (B-join) at (3.8, 1.2) {$\Join$};
\node[tbl] (B-rv)   at (2.5, 0.2) {\texttt{reviews}};
\node[tbl] (B-bk)   at (5.1, 0.2) {\texttt{books}};

\draw[arr] (B-rv)   -- (B-join);
\draw[arr] (B-bk)   -- (B-join);
\draw[arr] (B-join)  -- (B-sp);
\draw[arr] (B-sp)   -- (B-sig);

\end{tikzpicture}%
}
\caption{Pulling up $\SP$ and its dependent $\sigma$. (a)~$\SP$ computes a sentiment score on all reviews before the join. (b)~After pull-up, $\SP$ evaluates only reviews that have a matching book.}
\Description{Before and after plan trees showing SP pull-up above a join.}
\label{fig:sp-pullup}
\end{figure}
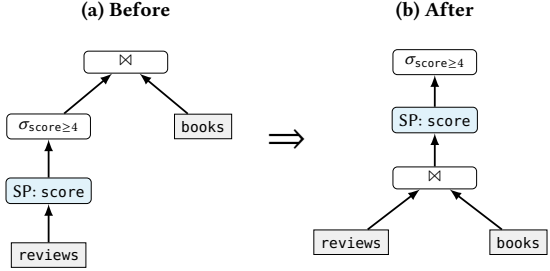

\paragraph{Decomposing semantic joins.}
The second reduction is for semantic joins.
From the definitions of $\SJ$ and $\SF$, we can rewrite $\SJ_\phi(R, S)$ as $\SF_\phi(R \times S)$.
\Cref{fig:sj-decomp} shows an example where a relational filter $\sigma$ coexists with a semantic join.
After decomposition, the semantic predicate becomes a separate $\SF$ that can be repositioned independently, just like any other semantic filter.
Note that the relational cost of the cross join is negligible compared to the $\SJ$ it replaces, since the original $\SJ$ already evaluates every pair with an LLM call.
Once decomposed, the $\SF$ can be repositioned by the optimizer, e.g., any relational filters can be pushed between $\times$ and $\SF$ to reduce the number of pairs before LLM evaluation. 
Note that prior work~\cite{patel2025semantic} defines semantic join only for inner joins, so our decomposition also considers only inner semantic joins.

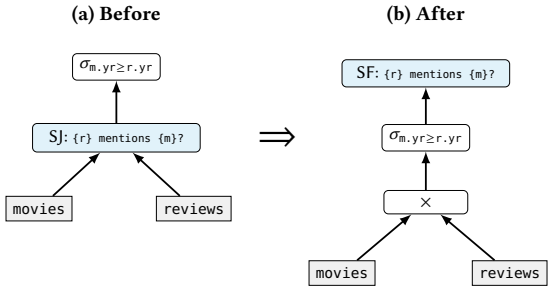
\begin{figure}[t]
\centering
\resizebox{0.85\columnwidth}{!}{%
\begin{tikzpicture}[
  op/.style={draw, rounded corners=2pt, font=\small, inner sep=3pt, minimum width=1.3cm, align=center},
  sem/.style={op, fill=colCost!25},
  tbl/.style={draw, fill=gray!12, font=\small, inner sep=3pt, align=center},
  lbl/.style={font=\normalsize\bfseries},
  arr/.style={-latex, thick},
]

\node[lbl] at (-1, 4.4) {(a) Before};
\node[op]  (A-sig) at (-1, 3.6) {$\sigma_{\texttt{m.yr} \geq \texttt{r.yr}}$};
\node[sem, minimum width=2.6cm] (A-sj) at (-1, 2.5) {$\SJ$: \scriptsize\texttt{\{r\} mentions \{m\}?}};
\node[tbl] (A-m)    at (-2.2, 1.4) {\texttt{movies}};
\node[tbl] (A-r)    at (0.2, 1.4) {\texttt{reviews}};

\draw[arr] (A-m)   -- (A-sj);
\draw[arr] (A-r)   -- (A-sj);
\draw[arr] (A-sj)  -- (A-sig);

\node[font=\Huge] at (1.5, 2.5) {$\Rightarrow$};

\node[lbl] at (3.8, 4.4) {(b) After};
\node[sem, minimum width=2.6cm] (B-sf) at (3.8, 3.5) {$\SF$: \scriptsize\texttt{\{r\} mentions \{m\}?}};
\node[op]  (B-sig)  at (3.8, 2.5) {$\sigma_{\texttt{m.yr} \geq \texttt{r.yr}}$};
\node[op]  (B-cross) at (3.8, 1.5) {$\times$};
\node[tbl] (B-m)    at (2.5, 0.4) {\texttt{movies}};
\node[tbl] (B-r)    at (5.1, 0.4) {\texttt{reviews}};

\draw[arr] (B-m)     -- (B-cross);
\draw[arr] (B-r)     -- (B-cross);
\draw[arr] (B-cross)  -- (B-sig);
\draw[arr] (B-sig)   -- (B-sf);

\end{tikzpicture}%
}
\caption{Decomposing a semantic join. (a)~$\sigma$ sits above the monolithic $\SJ$, which evaluates the semantic predicate on all pairs. (b)~After decomposition, $\sigma$ is placed between $\times$ and $\SF$, filtering pairs before LLM evaluation.}
\Description{Plan tree showing semantic join decomposed into cross join plus semantic filter with filter pushdown.}
\label{fig:sj-decomp}
\end{figure}

\paragraph{Reduced problem.}
We apply these two reductions repeatedly until no $\SJ$ or movable $\SP$ remains.
In particular, decomposing an $\SJ$ produces a new $\SF$, which can then be pulled up just like any other $\SF$.
Pulling up that $\SF$ may also enable pulling up an $\SP$ whose new column was previously used in the $\SJ$.
After convergence, all $\SP$ positions are fixed, and every $\SJ$ has been rewritten into $\SF$.
The remaining optimization concerns only $\SF$ placement: finding where to position each semantic filter within the fixed plan tree to minimize $C_{\mathrm{LLM}} + \alpha \cdot C_{\mathrm{rel}}$.
We discuss this next.

\section{Methodology}
\label{sec:method}

After applying the simplification in \Cref{sec:simplification}, the plan tree only has relational operators and semantic filters to consider: semantic projections have been fixed at their final highest positions, and semantic joins have been decomposed.
All results in this section apply to this simplified tree.
We now address the reduced problem of finding the optimal placement for each semantic filter.
We first show in \Cref{sec:pullup} that, with function caching, pulling up always minimizes LLM cost.
We then present in \Cref{sec:dp} a cost model that balances LLM and relational cost via dynamic programming.

\subsection{Pulling Up Semantic Filters}
\label{sec:pullup}

Recall from \Cref{sec:function-caching} that function caching evaluates each distinct input at most once.
As a result, the cost of a semantic filter is determined not by the number of tuples it processes, but by the number of \emph{distinct inputs} it receives.

Let $\SF_\phi$ reference a set of columns $C$.
When evaluated at a plan node $u$, the number of LLM calls equals the number of distinct non-null $\pi_C$-projections in the intermediate result at $u$, which we denote by $d_C(u)$.
We exclude null projections because, similar to SQL null semantics, we define them to require no LLM call, i.e., $\SF_\phi(\texttt{NULL}) = \texttt{NULL}$.
Thus, optimizing the placement of $\SF_\phi$ reduces to minimizing $d_C(u)$.

A natural question is whether placing $\SF_\phi$ higher in the plan always reduces its cost.
The following theorem shows that if $\SF_\phi$ is moved only across \emph{non-blocking operators} whose swap with $\SF$s preserves query semantics (i.e., relational filters, cross-products, projections, inner joins, and other $\SF$s), and not others (e.g., \texttt{LIMIT} and \texttt{UNION}), then the number of distinct inputs seen by $\SF_\phi$ can only decrease.

\begin{mythm}[Pull-up optimality]
\label{thm:pullup}
For any ancestor $u'$ of $u$ in the plan tree, if the path from $u$ to $u'$ contains only non-block operators, then
$d_C(u') \leq d_C(u)$.
\end{mythm}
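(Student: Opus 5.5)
The plan is to induct on the length of the path from $u$ to $u'$. By transitivity of $\leq$, it suffices to handle a single step: if $v$ is a child of $w$ and $w$ is a non-blocking operator, then $d_C(w) \leq d_C(v)$. Chaining this single-step inequality along the path $u = v_0, v_1, \dots, v_k = u'$ then gives $d_C(u') \leq d_C(u)$.

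The single step is proved with one invariant. For each operator type, I will show that every non-null $\pi_C$-projection appearing in the output of $w$ already appears as a non-null $\pi_C$-projection of some tuple in the input coming from $v$. Formally, the set $D_C(w)$ of distinct non-null $C$-projections satisfies $D_C(w) \subseteq D_C(v)$, and taking cardinalities yields the inequality. The check goes case by case.

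\begin{itemize}
\item A relational filter $\sigma$ and another semantic filter $\SF_\psi$ each output a subset of their input tuples, unchanged. Hence the $C$-projections of the output form a subset of those of the input.
\item For an inner join $v \Join v''$ or a cross product $v \times v''$, every output tuple is a concatenation $(t, t'')$ with $t$ from $v$. Since $C$ refers only to columns originating on the $v$ side (this is where $\SF_\phi$'s column set lies below $w$), we have $\pi_C(t,t'') = \pi_C(t)$. Duplication by join fan-out creates no new distinct values, and unmatched tuples only disappear.
\item A projection $\pi$ that retains the columns in $C$ leaves each tuple's $C$-values unchanged. Retaining $C$ is required for $\SF_\phi$ to remain placeable above it, and it is implied by the swap being semantics-preserving.
\end{itemize}

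The main obstacle is making precise which operators count as non-blocking and why the column set $C$ is still well defined along the whole path. In particular, outer joins can introduce NULL-padded tuples, and the argument absorbs these only because the definition of $d_C$ excludes null projections, under the convention $\SF_\phi(\texttt{NULL})=\texttt{NULL}$. Since the theorem restricts the path to inner joins, cross products, filters, projections, and $\SF$s, I would state this restriction explicitly as the hypothesis defining non-blocking operators. I would also remark that operators such as aggregation, \texttt{LIMIT}, or \texttt{UNION} are excluded: they either fail to preserve semantics when swapped with $\SF_\phi$, or can create new $C$-values (e.g., a \texttt{UNION} with another branch). The subset invariant fails exactly for these operators, which explains the hypothesis.
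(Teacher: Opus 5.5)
Your proposal is correct and takes essentially the same approach as the paper. Both show that each non-blocking operator on the path satisfies $D_C(\text{output}) \subseteq D_C(\text{input})$ and chain these inclusions: filters and $\SF$s select a subset of tuples, inner joins and cross products only replicate the $C$-values of the side containing $C$, and projections retaining $C$ preserve them; your explicit induction, your separate cross-product and $\SF$ cases, and your restriction to non-null projections are more careful bookkeeping of the paper's short argument. One overstatement in your closing remark: the subset invariant does not fail for \texttt{LIMIT}, since it outputs a subset of its input, so \texttt{LIMIT} is excluded because swapping it with $\SF_\phi$ changes semantics, not because $D_C$ could grow.
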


\begin{proof}
Let $D_C(v)$ denote the set of distinct $\pi_C$-projections in the intermediate result at node $v$.
It suffices to show that each operator on the path preserves or shrinks $D_C$. 

A filter $\sigma$ removes tuples, so $D_C(\sigma(R)) \subseteq D_C(R)$.
A projection $\pi$ that retains all columns in $C$ preserves distinct values, i.e., $D_C(\pi(R)) = D_C(R)$.
An inner join may replicate tuples but does not introduce new values for columns in $C$, so $D_C$ cannot increase.
Applying these observations along the path from $u$ to $u'$ yields $d_C(u') \leq d_C(u)$.
\end{proof}


We can see that pulling up $\SF_\phi$ across non-block relational operators is semantics-preserving.
Since $\SF_\phi$ depends only on columns in $C$, it removes the same tuples regardless of where it is applied.

\paragraph{Pull-up algorithm.}
\begin{algorithm}[t]
\caption{Semantic filter pull-up.}
\label{alg:pullup}
\KwIn{Plan tree $\mathcal{T}$ with semantic filters $\SF_1, \ldots, \SF_n$}
\KwOut{Modified plan tree with all filters pulled up}
$\textit{changed} \leftarrow \texttt{true}$\;
\While{changed\nllabel{ln:while}}{
    $\textit{changed} \leftarrow \texttt{false}$\;
    \ForEach{$\SF_i$ in $\mathcal{T}$\nllabel{ln:foreach}}{
        $p \leftarrow \text{parent}(\SF_i)$\;
        \If{$p \neq \texttt{root}$ \textbf{and} $p$ is not a block operator \textbf{and} $p$ is not an $\SF$\nllabel{ln:check}}{
            \If{$p$ is a projection $\pi$\nllabel{ln:projif}}{
                add columns referenced by $\SF_i$ to $\pi$\nllabel{ln:addcol}\;
            }
            swap $\SF_i$ with $p$\nllabel{ln:swap}\;
            $\textit{changed} \leftarrow \texttt{true}$\nllabel{ln:mark}\;
        }
    }
}
\Return $\mathcal{T}$\;
\end{algorithm}

Based on \Cref{thm:pullup}, we greedily pull every semantic filter as high as possible.
\Cref{alg:pullup} describes the procedure.
The outer loop (Line~\ref{ln:while}) repeats until no swap occurs in a full pass.
In each pass, the algorithm scans every $\SF_i$ (Line~\ref{ln:foreach}) and checks whether its parent $p$ is swappable (Line~\ref{ln:check}): the swap is skipped if $p$ is the root, a block operator, or another $\SF$.
If $p$ is a projection, the columns referenced by $\SF_i$ are added to $\pi$ before the swap (Lines~\ref{ln:projif}--\ref{ln:addcol}) so the predicate remains evaluable.
The swap itself happens at Line~\ref{ln:swap}, and Line~\ref{ln:mark} marks progress for the outer loop.
The outer loop is necessary because moving one filter may unblock another.
For example, if $\SF_1$ sits below a join that is below $\SF_2$, pulling $\SF_2$ above a higher operator in one iteration makes the join the new parent of $\SF_1$, which can then be swapped in the next iteration.

\begin{mythm}[Complexity of \Cref{alg:pullup}]
\label{thm:pullup-complexity}
\Cref{alg:pullup} terminates in $\mathcal{O}(n^2 \cdot d)$ time, where $n$ is the number of semantic filters and $d$ is the depth of the plan tree.
\end{mythm}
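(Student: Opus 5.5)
The argument is a potential function: bound the total number of swaps, then multiply by the cost of a pass. For a configuration of the tree, define
\[
\Phi = \sum_{i=1}^{n} h(\SF_i),
\]
where $h(\SF_i)$ is the number of non-$\SF$ operators on the path from $\SF_i$ up to the root. Since the path has at most $d$ nodes, $0 \le \Phi \le n d$ initially and at all later times.

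The key step is that every swap at Line~\ref{ln:swap} lowers $\Phi$ by exactly one. The swap only fires when the parent $p$ is not an $\SF$ (Line~\ref{ln:check}), so moving $\SF_i$ above $p$ removes one non-$\SF$ operator, namely $p$, from its upward path. The other filters are unaffected.
\begin{itemize}
\item A filter $\SF_j$ above $p$ keeps the same set of non-$\SF$ ancestors, because only $p$ and $\SF_i$ exchange places.
\item A filter $\SF_j$ in the subtree of $\SF_i$ has the same ancestor set, since $p$ is still an ancestor.
\item A filter in another subtree of $p$ keeps $p$ as an ancestor and gains $\SF_i$, which does not count toward $h$.
\end{itemize}
Adding columns to a projection at Line~\ref{ln:addcol} does not change the tree shape, so it does not affect $\Phi$. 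I should double-check that this bookkeeping survives when $p$ has several children. The paper's tree swap presumably reattaches $p$'s other children under $p$, which is now the child of $\SF_i$, so the argument above goes through. Hence the total number of swaps over the whole run is at most $nd$.

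It remains to bound the number of passes and the work per pass.
\begin{itemize}
\item \textbf{Passes.} Every pass except the last performs at least one swap; otherwise $\textit{changed}$ stays false and the loop exits. So there are at most $nd+1$ passes, which gives termination.
\item \textbf{Work per pass.} Each pass iterates over the $n$ filters and does $O(1)$ work per filter, assuming parent pointers, constant-time swaps, and projections adding at most a bounded number of columns.
\end{itemize}
The naive product is $O(nd)\cdot O(n) = O(n^2 d)$, which matches the claimed bound.

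The main obstacle is this constant-time-per-filter assumption. If adding columns to $\pi$ costs time proportional to the number of columns, or if locating the $\SF_i$ requires traversing the tree, the bound gains extra factors. I would state explicitly the cost model the paper implicitly uses: filters are kept in a list, parent pointers are available, and a swap is a constant number of pointer updates. I would also remark that blocked filters, whose parent is a block operator, the root, or another $\SF$, cost $O(1)$ per pass and do not affect the count.
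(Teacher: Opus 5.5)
Your proposal is correct and takes the same route as the paper. Both bound the number of progress-making swaps by $nd$ (each filter crosses at most $d$ non-$\SF$ operators), conclude there are about $nd$ passes of the outer loop, and charge $O(n)$ per pass. Your potential $\Phi$ just makes rigorous what the paper leaves informal, namely that one filter's swap never increases another filter's remaining distance; and because $\Phi$ only decreases, $\Phi \le nd$ holds at all times even though a swap past a join can deepen a sibling subtree.
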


\begin{proof}
Each semantic filter can be pulled up at most $d$ levels before reaching the root or an aggregation.
Since only swaps with non-$\SF$ operators count as progress, and there are $n$ filters, the outer loop executes at most $n \cdot d$ rounds.
Each round scans all $n$ filters, giving $\mathcal{O}(n^2 \cdot d)$ total work.
\end{proof}

\paragraph{Why pull-up alone is not sufficient.}
While \Cref{thm:pullup} shows that pull-up minimizes LLM cost, it can significantly increase relational cost (e.g., increasing join latency).
When all filters are pulled above the joins, the joins must process full, unfiltered tables, and intermediate result sizes can grow rapidly with the number of tables.
This motivates a cost model that balances the two objectives, which we present next.

\begin{figure*}[t!]
\centering
\resizebox{0.95\linewidth}{!}{%
\begin{tikzpicture}[
  op/.style={draw, rounded corners=2pt, font=\small, inner sep=3pt, minimum width=1.1cm, align=center},
  sem/.style={op, fill=colCost!25},
  tbl/.style={draw, fill=gray!12, font=\small, inner sep=3pt, align=center},
  box/.style={draw, rounded corners=4pt, font=\small, inner sep=6pt, minimum width=3cm, minimum height=1.3cm, align=center, line width=0.6pt},
  stepbox/.style={box, fill=gray!6, draw=gray!60},
  resbox/.style={draw, rounded corners=4pt, font=\small, inner sep=5pt, align=center, fill=colCost!12, draw=colCost!60, line width=0.6pt},
  winbox/.style={resbox, draw=colCost!80, line width=1.2pt, fill=colCost!20},
  arr/.style={-latex, thick, color=gray!70},
  treearr/.style={-latex},
  flowarr/.style={-latex, thick, dashed, color=colCost},
  lbl/.style={font=\normalsize\bfseries, color=black},
  sublbl/.style={font=\scriptsize, color=gray!60!black},
  ann/.style={font=\scriptsize, color=gray!50!black, align=left},
]

\node[lbl] at (-3.2, 3.4) {Input};
\node[op]  (T-join) at (-3.2, 2.5) {$u$: $\Join$};
\node[sem] (T-sf1) at (-4.5, 1.5) {$\SF_1$\,\scriptsize($s_1$)};
\node[sem] (T-sf2) at (-1.9, 1.5) {$\SF_2$\,\scriptsize($s_2$)};
\node[tbl] (T-bk)   at (-4.5, 0.5) {$L$\;\scriptsize($|L|$)};
\node[tbl] (T-rv)   at (-1.9, 0.5) {$R$\;\scriptsize($|R|$)};

\draw[treearr] (T-bk) -- (T-sf1);
\draw[treearr] (T-sf1) -- (T-join);
\draw[treearr] (T-rv) -- (T-sf2);
\draw[treearr] (T-sf2) -- (T-join);

\draw[gray!25, line width=0.4pt] (-0.6, -0.1) -- (-0.6, 3.7);

\node[lbl] at (2.0, 3.4) {\textbf{(1)} Distribute};
\node[stepbox, minimum width=3.2cm] (s1) at (2.0, 2.3) {
  $dp_{L,\{1\}} + dp_{R,\emptyset}$
};
\node[resbox] (r1) at (2.0, 0.7) {$dp_{u,\{1\}}$};
\draw[arr] (s1) -- (r1);

\node[lbl] at (6.2, 3.4) {\textbf{(2)} Relational cost};
\node[stepbox, minimum width=3.2cm] (s2) at (6.2, 2.3) {
  $+\;\alpha \cdot c(u) \cdot s_1$
};
\node[resbox] (r2) at (6.2, 0.7) {$dp_{u,\{1\}}$};
\draw[arr] (s2) -- (r2);

\node[lbl] at (10.4, 3.4) {\textbf{(3)} Place $\SF_2$ at $u$};
\node[stepbox, minimum width=3.2cm] (s3) at (10.4, 2.3) {
  $dp_{u,\{1\}} + N_{u,\SF_2}$
};
\node[winbox] (r3) at (10.4, 0.7) {$dp_{u,\{1,2\}}$};
\draw[arr] (s3) -- (r3);

\draw[flowarr] (r1.east) -- (s2.west);
\draw[flowarr] (r2.east) -- (s3.west);

\draw[gray!25, line width=0.4pt] (13.0, -0.1) -- (13.0, 3.7);

\node[lbl] at (15.3, 3.4) {Output};
\node[sem] (F-sf2)  at (15.3, 2.5) {$\SF_2$};
\node[op]  (F-join) at (15.3, 1.6) {$\Join$};
\node[sem] (F-sf1)  at (14.0, 0.7) {$\SF_1$};
\node[tbl] (F-bk)   at (14.0, -0.1) {$L$};
\node[tbl] (F-rv)   at (16.6, 0.7) {$R$};

\draw[treearr] (F-bk)  -- (F-sf1);
\draw[treearr] (F-sf1) -- (F-join);
\draw[treearr] (F-rv)  -- (F-join);
\draw[treearr] (F-join) -- (F-sf2);

\end{tikzpicture}%
}
\caption{Illustrating the DP cost model from \Cref{alg:dp} at join node $u$.
Starting from the input plan (left) where both filters are pushed down, the DP algorithm considers pulling $\SF_2$ up to $u$.
\textbf{(1)}~Distribute $\SF_1$ to $L$, leaving $R$ unfiltered.
\textbf{(2)}~Add relational cost of $u$, scaled by $\alpha$ and reduced by $s_1$.
\textbf{(3)}~Place $\SF_2$ at $u$; the join eliminates unmatched tuples, so the number of distinct $R$-tuples at $u$.
The output plan (right) shows the resulting placement.
The DP algorithm explores all such traces and picks the minimum.}
\Description{Plan tree and three DP steps at the join node showing distribute, add relational cost, and compare placement.}
\label{fig:dp-example}
\end{figure*}
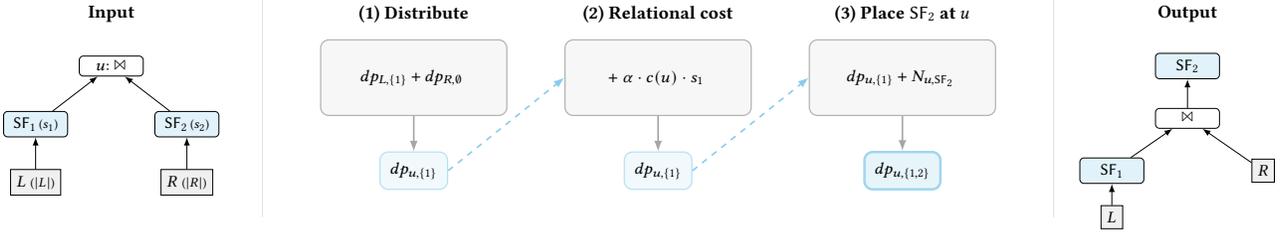

\subsection{DP-based Cost Model}
\label{sec:dp}

We now present a dynamic-programming (DP) algorithm that finds the optimal placement of each semantic filter, balancing LLM and relational cost as defined in Problem~\ref{prob:main}.

\paragraph{Overview.}
The key idea is to traverse the plan tree bottom-up and, at each node $u$, decide which subset of semantic filters to place at that node. This decision reflects a basic tradeoff. Placing a filter below a join reduces the join's input size, but may increase the number of LLM calls because fewer earlier operators have reduced the distinct input count. Placing it above the join can save LLM calls through function caching, but may increase relational cost by leaving the join input unfiltered.
Our DP formulation explores all valid assignments of filters to nodes and selects the one minimizing $C_{\mathrm{LLM}} + \alpha \cdot C_{\mathrm{rel}}$.
To do so, we process each node $u$ by the following three steps:
Step~1 combines the costs from child subtrees;
Step~2 adds the relational cost of operator $u$; and
Step~3 considers placing additional semantic filters at $u$.

\paragraph{Notation.}
Our DP algorithm computes the combined LLM and relational cost at each node for a given set of placed filters.
Let $n$ denote the number of semantic filters after applying simplification in \Cref{sec:simplification}, and let $\SF_1, \ldots, \SF_n$ be these filters.
For each $\SF_i$, let $s_i \in [0, 1]$ denote its selectivity, defined as the fraction of \emph{distinct} tuples it retains under function caching.
Assuming independence between filters, the combined selectivity on a table set $A$ for a set $S \subseteq \{1, \ldots, n\}$ is
\begin{equation}
\text{sel}(A, S) = \prod_{i \in S,\, \text{ref}(\SF_i) \cap A \neq \emptyset} s_i,
\label{eq:selectivity}
\end{equation}

where $\text{ref}(\SF_i)$ denotes the set of base tables that $\SF_i$ references.

To instantiate the cost model, we also need to estimate cardinalities.
For a node $u$, let $\text{tab}(u)$ denote the set of base tables in its subtree.
For the LLM cost of placing $\SF_i$ at node $u$, we need the number of distinct rows from the tables that $\SF_i$ references.
We denote this as $N_{u,\SF_i}$: the estimated number of distinct rows at $u$ projected onto $\text{ref}(\SF_i)$.

Finally, let $c(u)$ denote the per-operator execution cost of $u$ on its original input that only consider the effect of other relational operators but is unfiltered by $\SF$s. 
This can be estimated by the relational optimizer.

\paragraph{DP state and validity.}
Let $dp_{u,S}$ denote the minimum cost of executing the subtree rooted at $u$ when the semantic filters in $S$ have been applied at or below $u$.
Here, not every combination of $u$ and $S$ is valid, and we only consider filters whose original positions are within $u$'s subtree.
A state $dp_{u,S}$ is valid only if, for each $i \in S$, node $u$ lies above $\SF_i$'s original position.
Invalid states are skipped during the computation.
The base case is the leaf node: for a base table $u$, $dp_{u, \emptyset} = 0$.
The DP algorithm accumulates $C_{\mathrm{LLM}} + \alpha \cdot C_{\mathrm{rel}}$ based on its definition, where in step~2, we add the relational cost scaled by $\alpha$, while in step~3, we add the LLM cost directly.

\begin{algorithm}[t]
\caption{DP-based semantic filter placement.}
\label{alg:dp}
\KwIn{Plan tree $\mathcal{T}$, semantic filters $\SF_1, \ldots, \SF_n$, selectivities $s_1, \ldots, s_n$, parameter $\alpha$}
\KwOut{Optimal placement minimizing $C_{\mathrm{LLM}} + \alpha \cdot C_{\mathrm{rel}}$}
Initialize $dp_{u, \emptyset} \leftarrow 0$ for each leaf node $u$; $dp_{u, S} \leftarrow +\infty$ otherwise\nllabel{ln:dp-init}\;
\ForEach{node $u$ in bottom-up order\nllabel{ln:dp-bottomup}}{
    \ForEach{$S \subseteq \{1, \ldots, n\}$ in increasing size\nllabel{ln:dp-subset}}{
        \tcp{Step 1: distribute filters to children}
        \If{$u$ is binary with children $v_1, v_2$\nllabel{ln:dp-binary}}{
            $dp_{u,S} \leftarrow \min\!\big(dp_{u,S},\, \min_{S_1 \subseteq S} dp_{v_1, S_1} + dp_{v_2, S \setminus S_1}\big)$\nllabel{ln:dp-conv}\;
        }
        \ElseIf{$u$ is unary with child $v_1$}{
            $dp_{u,S} \leftarrow dp_{v_1, S}$\;
        }
        \tcp{Step 2: add relational cost at $u$}
        $dp_{u,S} \leftarrow dp_{u,S} + \alpha \cdot c(u) \cdot \text{sel}(\text{tab}(u), S)$\nllabel{ln:dp-rel}\;
        \tcp{Step 3: consider placing each $\SF_i \in S$ at $u$}
        \ForEach{$i \in S$\nllabel{ln:dp-try}}{
            $\ell \leftarrow N_{u,\SF_i} \cdot \text{sel}(\text{ref}(\SF_i), S \setminus \{i\})$\nllabel{ln:dp-llm}\;
            $dp_{u, S} \leftarrow \min(dp_{u, S},\; dp_{u, S \setminus \{i\}} + \ell)$\nllabel{ln:dp-trans}\;
        }
    }
}
\Return $dp_{r, \{1, \ldots, n\}}$ and trace back placement\nllabel{ln:dp-return}\;
\end{algorithm}

\Cref{alg:dp} gives the complete procedure.
The algorithm initializes all states to $+\infty$ except $dp_{u, \emptyset} = 0$ (Line~\ref{ln:dp-init}), then traverses the plan tree bottom-up (Line~\ref{ln:dp-bottomup}), processing subsets in increasing size (Line~\ref{ln:dp-subset}).
We now explain each step in detail.

\paragraph{Step 1: distributing filters to children.}
Each node has at most two children.
For a binary node $u$ with children $v_1, v_2$ (Line~\ref{ln:dp-conv}), we distribute the filters among them:
\begin{equation}
   dp_{u,S} \leftarrow \min\!\Big(dp_{u,S},\; \min_{S_1 \subseteq S} dp_{v_1, S_1} + dp_{v_2, S \setminus S_1}\Big),
\end{equation}
where $S_1$ is assigned to $v_1$ and the remainder $S \setminus S_1$ goes to $v_2$.
Filters whose columns span both children, such as those from $\SJ$ decomposition, cannot go to either child alone, so their child states remain $+\infty$.
Such filters are placed at $u$ via step~3, which runs at smaller subsets before step~1 sees them in $S$.
For a unary node, all filters pass through: $dp_{u, S} = dp_{v_1, S}$.

\paragraph{Step 2: relational cost at $u$.}
After combining the child costs, we add the relational cost of operator $u$ itself.
Since the DP algorithm iterates over subsets $S$ in increasing size order (Line~\ref{ln:dp-subset}), step~3 for subset $S' \subset S$ runs before steps~1--2 for $S$, so all filters in $S$ have valid $dp$ values from earlier iterations.
The filters in $S$ reduce the input, so the adjusted cost is (Line~\ref{ln:dp-rel}):
\begin{equation}
    dp_{u,S} \leftarrow dp_{u,S} + \alpha \cdot c(u) \cdot \text{sel}(\text{tab}(u), S). 
\end{equation}
The factor $\text{sel}(\text{tab}(u), S)$ accounts for the reduction in input size from the semantic filters already placed below $u$.

\paragraph{Step 3: applying a semantic filter at $u$.}
For each non-empty $S$, we enumerate each filter $\SF_i \in S$ and consider placing it at $u$ (Line~\ref{ln:dp-try}).
The LLM cost of evaluating $\SF_i$ at $u$ equals the number of distinct projections on the columns that $\SF_i$ references.
We approximate this by the number of distinct rows from the referenced tables at $u$:
\[
\text{LLM cost of } \SF_i \text{ at } u = N_{u,\SF_i} \cdot \text{sel}(\text{ref}(\SF_i), S \setminus \{i\}).
\]
The transition (Lines~\ref{ln:dp-llm}--\ref{ln:dp-trans}) updates $dp_{u,S}$ by considering placing $\SF_i$ at $u$ for each $i \in S$:
\begin{align*}
dp_{u, S} \leftarrow \min\big(&dp_{u, S},\\
&dp_{u, S \setminus \{i\}} + N_{u,\SF_i} \cdot \text{sel}(\text{ref}(\SF_i), S \setminus \{i\})\big).
\end{align*}


Finally, the optimal cost is $dp_{r, \{1, \ldots, n\}}$ (Line~\ref{ln:dp-return}), meaning all $n$ filters have been placed.
The actual placement is recovered by tracing back the choices at each node.

\Cref{fig:dp-example} provides an example showing the three DP steps above at a join node $u$ with children $L$ and $R$.
The input plan has both $\SF_1$ and $\SF_2$ pushed down.
The figure traces one alternative: keep $\SF_1$ pushed to $L$ (\texttt{Step 1}), add the join cost scaled by $\alpha$ and $s_1$ (\texttt{Step 2}), then place $\SF_2$ at $u$ (\texttt{Step 3}).
The output plan on the right shows this placement.
The DP algorithm explores all such alternatives and selects the minimum-cost plan.

\begin{mythm}[Complexity of \Cref{alg:dp}]
\label{thm:dp-complexity}
The DP algorithm runs in $\mathcal{O}(|V| \cdot n \cdot 2^n + 3^n)$ time, where $|V|$ is the number of nodes in the plan tree and $n$ is the number of semantic filters.
\end{mythm}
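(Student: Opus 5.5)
We bound the work of each line of \Cref{alg:dp} separately, summing per node, and then handle the binary-node convolution in Step~1 globally rather than per node. We assume the quantities $c(u)$, $N_{u,\SF_i}$, $s_i$ and $\text{tab}(u)$ are given or precomputed. For each subset $S$ we also precompute $\text{sel}(A,S)$ incrementally: the product for $S$ is the product for $S\setminus\{\max S\}$ times one factor, so each update costs $\mathcal{O}(1)$. This makes Line~\ref{ln:dp-rel} cost $\mathcal{O}(1)$ per state. With a bitmask of tables touched by each filter, the selectivity product in Line~\ref{ln:dp-llm} can likewise be obtained in $\mathcal{O}(1)$ amortized, or in $\mathcal{O}(n)$ directly; either is absorbed by the bound below.

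\textbf{Per-state work other than Step~1 at binary nodes.} There are $|V|\cdot 2^n$ states $(u,S)$. For unary nodes, Step~1 is a single copy, costing $\mathcal{O}(1)$. Step~2 costs $\mathcal{O}(1)$. Step~3 loops over the at most $n$ indices $i\in S$ and does $\mathcal{O}(1)$ work for each, given the precomputed selectivities. Summing over all states gives $\mathcal{O}(|V|\cdot n\cdot 2^n)$. Processing subsets in increasing size only fixes an order of the $2^n$ subsets, so it adds no extra cost.

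\textbf{Step~1 at binary nodes, the main obstacle.} A naive count charges $\sum_{S}2^{|S|}=3^n$ pairs $(S_1\subseteq S)$ per binary node, which gives $|V|\cdot 3^n$ rather than the additive $3^n$ claimed. To obtain the stated bound, I would use validity. A filter $\SF_i$ can appear in $S_1$ at $v_1$ only if its original position lies in the subtree of $v_1$, and in $S\setminus S_1$ at $v_2$ only if its original position lies in the subtree of $v_2$. Since these subtrees are disjoint, each filter has at most one feasible side. So for a binary node $u$, let $F_u$ be the set of filters originating in its subtree. Enumerating submask pairs only over valid states costs $3^{|F_{v_1}|+|F_{v_2}|}$ at most. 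Even cheaper, the split is essentially forced by origin, apart from filters that sit at $u$ itself. Charging the work to disjoint filter sets and using convexity, $\sum_u 3^{|F_{v_1}|}\cdot 3^{|F_{v_2}|}$ over binary nodes, where these sets partition among the subtrees, is dominated by $3^n$ plus lower-order terms. I expect that this term can only be made rigorous as a per-node bound of $3^{f_u}$ whose sum telescopes to $\mathcal{O}(3^n)$. This would follow from the fact that $3^{a}+3^{b}\le 3^{a+b}$ for $a,b\ge1$, applied inductively up the tree.

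Adding the two contributions gives $\mathcal{O}(|V|\cdot n\cdot 2^n + 3^n)$. If the disjointness argument turns out too delicate, the fallback is to state the convolution cost as $\mathcal{O}(3^n)$ per binary node and note that it is additive under the paper's cost accounting. I would treat this as the point to check most carefully.
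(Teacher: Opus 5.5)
Your accounting of Steps~2--3 (and of Step~1 at unary nodes) matches the paper's and is sound. The gap is in Step~1 at binary nodes. Your telescoping charges $3^{f_u}$ per binary node and closes the induction with $3^a+3^b\le 3^{a+b}$ for $a,b\ge 1$. It fails exactly when one child of a binary node has no filters in its subtree, and that is the common case. Take a left-deep chain $(\cdots((T_1\Join T_2)\Join T_3)\cdots)\Join T_{k+1}$ with all $n$ semantic filters originating above $T_1$. Every join then has $f_u=n$ and a filter-free child. Enumerating all $S_1\subseteq S$ over the valid $S$ therefore costs $3^n$ at each of the $k$ joins, i.e.\ $k\cdot 3^n$ in total. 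With $k=\Theta(n)$, so $|V|=\Theta(n)$, this is $\Theta(n\,3^n)$, which is not $\mathcal{O}(|V|\,n\,2^n+3^n)$.

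The same example refutes your claim that $\sum_u 3^{|F_{v_1}|}\cdot 3^{|F_{v_2}|}$ is $3^n$ plus lower-order terms. Your fallback of $\mathcal{O}(3^n)$ per binary node just gives $\mathcal{O}(|V|\,3^n)$; nothing in the cost accounting makes that additive. Even when both children carry filters, $3^a+3^b\le 3^{a+b}$ is too weak to close an inductive bound of the form ``Step-1 work in the subtree of $u$ is at most $c\,3^{f_u}$''. For that you would need $3^a+3^b\le\tfrac{2}{3}\,3^{a+b}$, which does hold for $a,b\ge 1$, together with $c\ge 3$. The paper's own proof makes the same leap: its recurrence $T(u)=3^{m(u)}+T(v_1)+T(v_2)\le 2\cdot 3^{m(u)}$ fails on this chain, where $T$ reaches $k\cdot 3^n$.

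The repair is the remark you made in passing and then dropped: the split is forced. If a filter originates in $v_1$'s subtree, every state containing it at a node of $v_2$'s subtree is invalid, and vice versa. A filter spanning both children is infeasible on both sides. So for each valid $S$, the minimum in Line~\ref{ln:dp-conv} has at most one finite term, namely $S_1=S\cap F_{v_1}$, and only when $S\subseteq F_{v_1}\cup F_{v_2}$. Evaluating just that term costs $\mathcal{O}(1)$ per state with bitmasks, or $\mathcal{O}(n)$ directly. Binary Step~1 then costs $\mathcal{O}(|V|\,n\,2^n)$ in total, and the theorem follows without needing the $3^n$ term at all.

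You should state explicitly that this assumes an implementation that enumerates only feasible splits. If the submask minimum is executed literally, the chain example shows the stated bound is false, so no accounting argument can rescue it.
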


\begin{proof}

We first analyze steps 2 and 3. 
At each node, step~2 iterates over $2^n$ subsets in $\mathcal{O}(1)$ each.
In step~3, for each non-empty $S$, we enumerate each $i \in S$ and consider placing $\SF_i$ at $u$.
The total work across all subsets is $\mathcal{O}(n \cdot 2^n)$ per node.
Over all $|V|$ nodes, the total cost of steps~2--3 is $\mathcal{O}(|V| \cdot n \cdot 2^n)$.

For step 1, which is a standard subset convolution problem~\cite{stoian2024dpconv},let $m(u)$ denote the number of semantic filters whose valid range includes binary node $u$. 
Specifically, step~1 at $u$ enumerates all $S_1 \subseteq S$ for each valid $S$, costing $\sum_{k=0}^{m(u)} \binom{m(u)}{k} 2^k = 3^{m(u)}$.
At a binary node $u$ with children $v_1, v_2$, filters are distributed: each filter valid at $u$ goes to at most one child, so $m(v_1) + m(v_2) \leq m(u)$.
Let $T(u)$ be the total cost of all step~1 computations in the subtree of $u$.
We have $T(u) = 3^{m(u)} + T(v_1) + T(v_2) \leq 2 \cdot 3^{m(u)}$.

Therefore, the total time complexity is $\mathcal{O}(|V| \cdot n \cdot 2^n + 3^n)$.
\end{proof}
\section{Implementation}
\label{sec:implementation}

We implement \textsc{PLOP} as an extension to DuckDB~\cite{raasveldt2019duckdb}, modifying its parser, binder, optimizer, and executor with approximately 3,000 lines of C++ code.

\paragraph{Parsing and binding.}
We extend DuckDB’s SQL parser to support semantic operators as native scalar functions: \texttt{SEMANTIC} for $\SF$ and $\SJ$, and typed variants for $\SP$.
During parsing, we split hybrid \texttt{WHERE} clauses into minimal units so that each semantic predicate becomes a separate $\SF$ node that can be independently repositioned.
When \Cref{alg:pullup} pulls an $\SF$ above a projection $\pi$, the columns referenced by $\SF$ must remain available.
We clear and rebuild DuckDB’s projection map after each swap to ensure these columns are not pruned.

\paragraph{Optimizer integration.}
\textsc{PLOP} integrates into DuckDB's optimization pipeline without modifying the native optimizer.
After DuckDB produces a plan with join order and predicate pushdown, \textsc{PLOP} will be executed as a post-processing step.
Semantic filters start at the positions assigned by DuckDB's native optimizer, which typically pushes them down to their lowest feasible positions.
\textsc{PLOP} then considers pulling each filter up from there.
It first applies the simplifications from\Cref{sec:simplification}: $\SP$ pull-up and $\SJ$ decomposition.
The user then selects one of two optimization strategies.
The pull-up algorithm from \Cref{alg:pullup} greedily moves all semantic filters as high as possible.
The DP cost model from \Cref{alg:dp} selectively places each filter to balance LLM and relational cost using DuckDB's cardinality estimates for $N_u$ and $c(u)$.
We evaluate both strategies in \Cref{sec:experiments}.

\paragraph{Cost and cardinality estimation.}
The DP algorithm requires three types of estimates as shown in \Cref{sec:dp}.
The relational cost $c(u)$ at each node is obtained directly from DuckDB's native cardinality estimator.
For semantic filter selectivities used in \Cref{eq:selectivity}, following the approach of join order optimization with minimal statistics~\cite{ebergen2023join}, we set $s_i = 0.2$ for all $\SF$.
We additionally introduce $s_{\Join} = 0.1$ to estimate how much each join reduces the distinct count from one side. 
Cross join (e.g., from $\SJ$ decomposition) is a special case that has selectivity 1 because it does not reduce the count.
For distinct-count estimates $N_{u,\SF_i}$, which represent the number of distinct rows at node $u$ projected onto the tables referred to by $\SF_i$, we traverse the path from the base table to $u$ and multiply the size of the base table by $s_i$ at each semantic filter and by $s_{\Join}$ at each join along the path.
This provides a simple, statistics-free estimate that is consistent with the selectivity model used in the DP recurrence.
Fine-grained estimation via sampling or learned models is complementary and can replace these fixed defaults.
We evaluate sensitivity to these estimates in \Cref{sec:exp-ablations}.

\paragraph{Function caching.}
We implement function caching using a single concurrent hash table shared across all operators in the execution pipeline, with bucket-level locking for parallel reads and writes during vectorized execution.
The cache is keyed on the rendered prompt string, which includes both the predicate $\phi$ and the input tuple values, so different predicates never share cache entries.
On a cache hit, the LLM call is skipped entirely.
The cache is scoped per query execution and cleared between queries.

However, function caching is not free: cache lookups add relational overhead when a semantic filter is pulled above a join, since every output row triggers a cache probe.
The cost model accounts for this by including lookup cost in the relational cost at the join node, estimated by the join output size times the number of semantic filters at ancestors of that join.

\paragraph{Execution.}
Lastly, query execution follows DuckDB's push-based vectorized pipeline.
Semantic operators are executed as scalar functions within the pipeline, with function caching intercepting redundant LLM calls.
In \textsc{PLOP}, no changes to DuckDB's execution engine are required beyond registering the semantic functions and the function cache lookup.

\section{Evaluation}
\label{sec:experiments}

We evaluate \textsc{PLOP} on two benchmarks and address the following research questions:
\begin{itemize}[leftmargin=*, itemsep=0.5ex]
    \item \textbf{RQ1:} Does \textsc{PLOP} improve latency and reduce LLM cost while maintaining accuracy compared to existing systems?
    \item \textbf{RQ2:} Does accuracy hold against human-annotated ground truth, not just the baseline output?
    \item \textbf{RQ3:} When does pull-up suffice, and when does using \textsc{PLOP}'s cost model provide additional benefit?
    \item \textbf{RQ4:} How sensitive is the cost model to its parameters ($\alpha$, selectivity estimates), and what is the optimizer overhead?
\end{itemize}

\subsection{Experimental Setup}
\label{sec:exp-setup}

\paragraph{Environment.}
All experiments run on an AWS \texttt{r7i.8xlarge} instance (32 vCPUs, 256 GB RAM).
We use GPT-5-mini as the LLM backend via the OpenAI API.
The cost model parameter is set to $\alpha = 10^{-7}$, reflecting the large gap between per-row relational processing cost and per-call LLM inference cost.

\paragraph{Hybrid query benchmark.}
Existing semantic query benchmarks such as SemBench~\cite{lao2026sembenchbenchmarksemanticquery} focus on simple queries with few relational and semantic operators (most $\leq 3$), which offer limited optimization opportunity for plan-level placement.
To stress-test \textsc{PLOP} on complex hybrid plans, we construct a benchmark of 30 semantic SQL queries spanning four schemas based on DataAgentBench~\cite{ma2026can} and TPC-H~\cite{tpch2024}.
The original queries from \cite{ma2026can} are written in natural language or basic SQL for a database agent benchmark.
We adapt them into hybrid SQL by translating natural language intent into semantic operators ($\SP$, $\SF$, $\SJ$) composed with relational operators.
For TPC-H, we augment standard analytical queries with semantic predicates over text-rich columns.
\Cref{tab:schema-sizes} summarizes the schema sizes.
\Cref{fig:query-char} shows the operator composition of each query in the hybrid query benchmark.
The queries cover all three semantic operator types:
Q1--Q3 use $\SP$ to summarize or score text fields.
Q4--Q30 use $\SF$s.
Q16--Q17, Q25, and Q27--Q30 additionally use $\SJ$ to match tuples across tables.
Query complexity ranges from two table queries with one semantic operator to multi-way joins with up to 9 relational joins and 4 semantic filters. 

\begin{table}[t]
\centering
\caption{Hybrid query benchmark: schema statistics (Avg Rows rounded to nearest integer).}
\label{tab:schema-sizes}
\renewcommand{\arraystretch}{1.15}
\begin{tabular}{l r r l}
\toprule
\textbf{Schema} & \textbf{Tables} & \textbf{Avg Rows} & \textbf{Source} \\
\midrule
BookReview & 3 & 682 & \cite{ma2026can} \\
Yelp & 3 & 1{,}594 & \cite{ma2026can} \\
GoogleLocal & 2 & 1{,}040 & \cite{ma2026can} \\
TPC-H (SF=0.005) & 8 & 10{,}851 & \cite{tpch2024} \\
\bottomrule
\end{tabular}
\end{table}

\definecolor{colSP}{HTML}{E07B54}     
\definecolor{colSF}{HTML}{F2A63B}     
\definecolor{colSJc}{HTML}{C75D8A}    
\definecolor{colSig}{HTML}{6BA3C7}    
\definecolor{colJoin}{HTML}{3D7A99}   

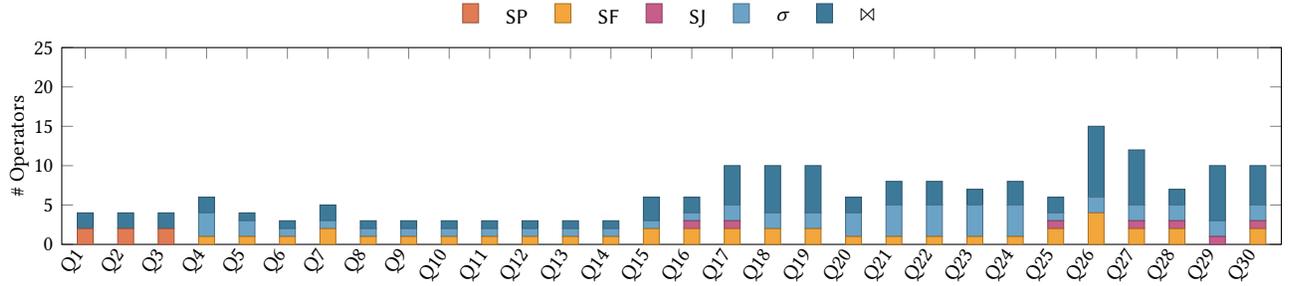
\begin{figure*}[t]
\centering
\begin{tikzpicture}
\begin{axis}[
  ybar stacked,
  width=\textwidth,
  height=4.2cm,
  enlarge x limits=0.02,
  bar width=6pt,
  symbolic x coords={Q1,Q2,Q3,Q4,Q5,Q6,Q7,Q8,Q9,Q10,Q11,Q12,Q13,Q14,Q15,Q16,Q17,Q18,Q19,Q20,Q21,Q22,Q23,Q24,Q25,Q26,Q27,Q28,Q29,Q30},
  xtick=data,
  xticklabel style={font=\small, rotate=55, anchor=east},
  ymin=0, ymax=25,
  ytick={0,5,10,15,20,25,30},
  ylabel={\# Operators},
  ylabel style={font=\small, at={(axis description cs:-0.02,0.5)}},
  yticklabel style={font=\small},
  ymajorgrids=false,
  legend style={
    font=\small,
    at={(0.5,1.05)},
    anchor=south,
    legend columns=5,
    column sep=8pt,
    draw=none,
  },
]

\addplot[fill=colSP, draw=colSP!70!black] coordinates{
  (Q1,2)(Q2,2)(Q3,2)(Q4,0)(Q5,0)(Q6,0)(Q7,0)(Q8,0)(Q9,0)(Q10,0)
  (Q11,0)(Q12,0)(Q13,0)(Q14,0)(Q15,0)(Q16,0)(Q17,0)(Q18,0)(Q19,0)(Q20,0)
  (Q21,0)(Q22,0)(Q23,0)(Q24,0)(Q25,0)(Q26,0)(Q27,0)(Q28,0)(Q29,0)(Q30,0)};

\addplot[fill=colSF, draw=colSF!70!black] coordinates{
  (Q1,0)(Q2,0)(Q3,0)(Q4,1)(Q5,1)(Q6,1)(Q7,2)(Q8,1)(Q9,1)(Q10,1)
  (Q11,1)(Q12,1)(Q13,1)(Q14,1)(Q15,2)(Q16,2)(Q17,2)(Q18,2)(Q19,2)(Q20,1)
  (Q21,1)(Q22,1)(Q23,1)(Q24,1)(Q25,2)(Q26,4)(Q27,2)(Q28,2)(Q29,0)(Q30,2)};

\addplot[fill=colSJc, draw=colSJc!70!black] coordinates{
  (Q1,0)(Q2,0)(Q3,0)(Q4,0)(Q5,0)(Q6,0)(Q7,0)(Q8,0)(Q9,0)(Q10,0)
  (Q11,0)(Q12,0)(Q13,0)(Q14,0)(Q15,0)(Q16,1)(Q17,1)(Q18,0)(Q19,0)(Q20,0)
  (Q21,0)(Q22,0)(Q23,0)(Q24,0)(Q25,1)(Q26,0)(Q27,1)(Q28,1)(Q29,1)(Q30,1)};

\addplot[fill=colSig, draw=colSig!70!black] coordinates{
  (Q1,0)(Q2,0)(Q3,0)(Q4,3)(Q5,2)(Q6,1)(Q7,1)(Q8,1)(Q9,1)(Q10,1)
  (Q11,1)(Q12,1)(Q13,1)(Q14,1)(Q15,1)(Q16,1)(Q17,2)(Q18,2)(Q19,2)(Q20,3)
  (Q21,4)(Q22,4)(Q23,4)(Q24,4)(Q25,1)(Q26,2)(Q27,2)(Q28,2)(Q29,2)(Q30,2)};

\addplot[fill=colJoin, draw=colJoin!70!black] coordinates{
  (Q1,2)(Q2,2)(Q3,2)(Q4,2)(Q5,1)(Q6,1)(Q7,2)(Q8,1)(Q9,1)(Q10,1)
  (Q11,1)(Q12,1)(Q13,1)(Q14,1)(Q15,3)(Q16,2)(Q17,5)(Q18,6)(Q19,6)(Q20,2)
  (Q21,3)(Q22,3)(Q23,2)(Q24,3)(Q25,2)(Q26,9)(Q27,7)(Q28,2)(Q29,7)(Q30,5)};

\legend{$\SP$, $\SF$, $\SJ$, $\sigma$, $\Join$}

\end{axis}
\end{tikzpicture}
\caption{Hybrid query benchmark: operator composition per query. Each stacked bar shows the count of $\SP$, $\SF$, $\SJ$, $\sigma$, and $\Join$ operators. Q17--Q19 and Q26--Q30 are the most complex, with up to 9 joins and 4 semantic filters.}
\Description{Stacked bar chart showing operator counts per query across the 30-query hybrid benchmark.}
\label{fig:query-char}
\end{figure*}

\paragraph{SemBench E-Commerce.}
In our hybrid queries, we compute the quality of each method by referring to the results of the initial DuckDB UDF execution as ground truth, which does not have semantic-related query rewriting that will introduce additional errors.
All errors come from LLM's non-determinism across invocations.
To further validate the accuracy against ground truth, we additionally evaluate on the E-Commerce subset of SemBench~\cite{lao2026sembenchbenchmarksemanticquery}, which has human-annotated labels.
This benchmark contains 14 queries over an e-commerce product catalog.
These queries are simpler than ours and typically involve fewer than 3 relational operators, so the main question is whether \textsc{PLOP} maintains high accuracy while still reducing cost and latency.

\paragraph{Baselines.}
We compare against the following systems.
All use GPT-5-mini as the sole LLM backend.
\begin{itemize}[leftmargin=*, itemsep=0.5ex]
    \item \textit{DuckDB + Cache}: semantic operators executed as DuckDB UDFs with function caching enabled but no placement optimization. This isolates the gains from \textsc{PLOP}'s placement decisions.
    \item \textit{ThalamusDB}~\cite{jo2024thalamusdb} (v0.1.15): approximate query processing for multi-modal data. Does not support $\SP$, so queries with semantic projections are excluded.
    \item \textit{FlockMTL}~\cite{dorbani2025beyond} (v0.7.0): LLM operator integration in DuckDB with its own prompt and batching optimizations.
    \item \textit{Palimpzest}~\cite{liu2025palimpzest} \textit{and Abacus}~\cite{russo2025abacuscostbasedoptimizersemantic} (v1.4.0): declarative LLM analytics with cost-based planning. Palimpzest optimizes individual operator implementations, while Abacus adds cost-based physical plan selection. We evaluate both.
    \item \textit{Lotus}~\cite{patel2025semantic} (v1.1.3): model cascading that routes easy cases to a cheaper proxy model and harder cases to GPT-5-mini. Following the SemBench configuration~\cite{lao2026sembenchbenchmarksemanticquery}, Lotus uses e5-base-v2 for text embeddings and CLIP-ViT-B-32 for image embeddings. Because its multi-model setup makes cost and latency not directly comparable, we include Lotus on SemBench only for accuracy comparison.
\end{itemize}

\paragraph{Metrics.}
We report three metrics.
\emph{Speedup} is the geometric mean of the per-query latency improvement relative to DuckDB + Cache.
\emph{Cost Reduction} is the geometric mean of the per-query LLM cost reduction relative to DuckDB + Cache.
Here, LLM cost refers to the total OpenAI API cost in dollars incurred during a query execution, including both input and output tokens.
We also report the quality of the results. On the 30-query hybrid benchmark, we define quality as the arithmetic mean of per-query F1 scores, using the output of DuckDB + Cache as the reference.
On SemBench, we report the benchmark-defined quality metric against the provided human-annotated ground truth.
We run each query with a timeout of 3,000 seconds.

\subsection{Hybrid Query Benchmark Results}
\label{sec:exp-e2e}

\begin{table}[t]
\centering
\caption{Overall performance on the 30-query hybrid benchmark. Speedup and cost reduction are geometric means vs.\ DuckDB + Cache baseline. F1 is the arithmetic mean.}
\label{tab:summary}
\small
\renewcommand{\arraystretch}{1.15}
\resizebox{0.9\linewidth}{!}{%
\begin{tabular}{l r r r}
\toprule
\textbf{Method} & \textbf{Speedup} & \textbf{Cost Red.} & \textbf{Avg F1} \\
\midrule
\rowcolor{gray!12} Baseline (DuckDB + Cache) & 1.00$\times$ & 1.00$\times$ & 1.000 \\[1pt]
\rowcolor{colCost!12} \textsc{PLOP}-Cost & 1.50$\times$ & 4.18$\times$ & 0.848 \\
\rowcolor{colCost!12} \textsc{PLOP}-Pullup & 1.16$\times$ & 4.29$\times$ & 0.849 \\[1pt]
ThalamusDB$^\dagger$ & 1.50$\times$ & 10.10$\times$ & 0.515 \\
FlockMTL & 0.64$\times$ & 7.31$\times$ & 0.343 \\
Palimpzest & 1.70$\times$ & 2.15$\times$ & 0.533 \\
Abacus & 0.15$\times$ & 0.43$\times$ & 0.561 \\
\bottomrule
\end{tabular}%
}
\par\vspace{2pt}
{\footnotesize $^\dagger$\, Q1--Q3 are excluded from ThalamusDB as $\SP$ is not supported.}
\end{table}


\pgfplotsset{
  failmark/.style={
    only marks,
    mark=x,
    mark size=3.5pt,
    line width=1.4pt,
  },
  nonebar/.style={bar shift=-18pt},
  costbar/.style={bar shift=-12pt},
  pullbar/.style={bar shift=-6pt},
  thalbar/.style={bar shift=0pt},
  flockbar/.style={bar shift=6pt},
  palmbar/.style={bar shift=12pt},
  ababar/.style={bar shift=18pt},
  compactlegend/.style={
    legend columns=7,
    legend cell align=left,
    legend style={
      font=\Large,
      at={(0.5,1.08)},
      anchor=south,
      draw=none,
      fill=none,
      column sep=12pt,
      /tikz/every even column/.append style={column sep=10pt}
    }
  }
}

\definecolor{colAba}{RGB}{0,122,204}

\begin{figure*}[t!]
\centering

\subfigure[Latency Q1--Q15]{%
\resizebox{0.95\linewidth}{!}{%
\begin{tikzpicture}[trim axis left, trim axis right]
\begin{axis}[FIGALL,compactlegend,
  enlarge x limits=0.04,
  symbolic x coords={Q1,Q2,Q3,Q4,Q5,Q6,Q7,Q8,Q9,Q10,Q11,Q12,Q13,Q14,Q15},
  xtick={Q1,Q2,Q3,Q4,Q5,Q6,Q7,Q8,Q9,Q10,Q11,Q12,Q13,Q14,Q15},
  ymode=log, log origin=infty,
  ymin=1, ymax=5000,
  ytick={1,10,100,1000},
  yticklabels={$10^0$,$10^1$,$10^2$,$10^3$},
  ylabel={Time (s)},
  extra x ticks={Q2,Q4,Q6,Q8,Q10,Q12,Q14},
  extra x tick style={grid=major, grid style={draw=gray!15, line width=1.7cm}},
  extra x tick labels={},
]
\addplot[nonebar,fill=colNone,draw=colNone!70!black] coordinates{
  (Q1,12.56)(Q2,7.95)(Q3,9.78)
  (Q4,19.16)(Q5,5.15)(Q6,14.34)
  (Q7,12.79)(Q8,11.52)(Q9,11.03)
  (Q10,17.09)(Q11,12.11)(Q12,9.98)
  (Q13,12.84)(Q14,9.19)(Q15,75.46)};
\addplot[costbar,fill=colCost,draw=colCost!70!black] coordinates{
  (Q1,11.81)(Q2,17.15)(Q3,8.29)
  (Q4,14.64)(Q5,5.33)(Q6,7.21)
  (Q7,10.55)(Q8,2.92)(Q9,2.71)
  (Q10,10.29)(Q11,12.22)(Q12,6.32)
  (Q13,8.35)(Q14,4.11)(Q15,25.09)};
\addplot[pullbar,fill=colPull,draw=colPull!70!black] coordinates{
  (Q1,12.34)(Q2,7.78)(Q3,8.66)
  (Q4,14.82)(Q5,4.53)(Q6,3.00)
  (Q7,12.83)(Q8,2.08)(Q9,2.42)
  (Q10,8.73)(Q11,10.97)(Q12,9.47)
  (Q13,6.24)(Q14,5.22)(Q15,65.13)};
\addplot[thalbar,fill=colThal,draw=colThal!70!black] coordinates{
  (Q1,0)(Q2,0)(Q3,0)(Q4,0)(Q5,1.42)(Q6,14.92)
  (Q7,0)(Q8,2.07)(Q9,16.83)(Q10,6.57)(Q11,6.10)(Q12,4.93)
  (Q13,0)(Q14,6.62)(Q15,0)};
\addplot[flockbar,fill=colFlock,draw=colFlock!70!black] coordinates{
  (Q1,61.82)(Q2,0)(Q3,37.42)(Q4,0)(Q5,6.12)(Q6,0)
  (Q7,0)(Q8,0)(Q9,77.06)(Q10,8.91)(Q11,0)(Q12,9.94)
  (Q13,0)(Q14,0)(Q15,0)};
\addplot[palmbar,fill=colPalm,draw=colPalm!70!black] coordinates{
  (Q1,0)(Q2,0)(Q3,0)(Q4,0)(Q5,2.51)(Q6,6.53)
  (Q7,0)(Q8,6.81)(Q9,0)(Q10,4.42)(Q11,3.41)(Q12,2.42)
  (Q13,4.38)(Q14,2.64)(Q15,0)};
\addplot[ababar,fill=colAba,draw=colAba!70!black] coordinates{
  (Q1,0)(Q2,0)(Q3,0)(Q4,0)(Q5,9.07)(Q6,313.66)
  (Q7,0)(Q8,320.82)(Q9,344.72)(Q10,58.19)(Q11,336.92)(Q12,58.42)
  (Q13,310.43)(Q14,68.85)(Q15,0)};
\addplot[failmark, draw=colThal, xshift=0pt] coordinates{
  (Q1,1.3)(Q2,1.3)(Q3,1.3)(Q4,1.3)(Q7,1.3)(Q13,1.3)(Q15,1.3)};
\addplot[failmark, draw=colFlock, xshift=6pt] coordinates{
  (Q2,1.3)(Q4,1.3)(Q6,1.3)(Q7,1.3)
  (Q8,1.3)(Q11,1.3)(Q13,1.3)(Q14,1.3)(Q15,1.3)};
\addplot[failmark, draw=colPalm, xshift=12pt] coordinates{
  (Q1,1.3)(Q2,1.3)(Q3,1.3)(Q4,1.3)(Q7,1.3)(Q9,1.3)(Q15,1.3)};
\addplot[failmark, draw=colAba, xshift=18pt] coordinates{
  (Q1,1.3)(Q2,1.3)(Q3,1.3)(Q4,1.3)(Q7,1.3)(Q15,1.3)};
\addplot[failmark, draw=colCost, xshift=-12pt] coordinates{
};
\addplot[failmark, draw=colPull, xshift=-6pt] coordinates{
};
\addlegendimage{legend image code/.code={
  \fill[colNone]   (0,-.1) rectangle (.4,.35);
  \fill[colNone]   (.4,-.1) rectangle (.8,.35);
  \filldraw[pattern=north east lines,pattern color=black,draw=colNone!70!black] (.4,-.1) rectangle (.8,.35);
  \draw[colNone!70!black] (0,-.1) rectangle (.4,.35);}}
\addlegendentry{Baseline (DuckDB + Cache)}
\addlegendimage{legend image code/.code={
  \fill[colCost]   (0,-.1) rectangle (.4,.35);
  \fill[colCost]   (.4,-.1) rectangle (.8,.35);
  \filldraw[pattern=north east lines,pattern color=black,draw=colCost!70!black] (.4,-.1) rectangle (.8,.35);
  \draw[colCost!70!black] (0,-.1) rectangle (.4,.35);}}
\addlegendentry{\textsc{PLOP}-Cost (Ours)}
\addlegendimage{legend image code/.code={
  \fill[colPull]   (0,-.1) rectangle (.4,.35);
  \fill[colPull]   (.4,-.1) rectangle (.8,.35);
  \filldraw[pattern=north east lines,pattern color=black,draw=colPull!70!black] (.4,-.1) rectangle (.8,.35);
  \draw[colPull!70!black] (0,-.1) rectangle (.4,.35);}}
\addlegendentry{\textsc{PLOP}-Pullup (Ours)}
\addlegendimage{legend image code/.code={
  \fill[colThal]   (0,-.1) rectangle (.4,.35);
  \fill[colThal]   (.4,-.1) rectangle (.8,.35);
  \filldraw[pattern=north east lines,pattern color=black,draw=colThal!70!black] (.4,-.1) rectangle (.8,.35);
  \draw[colThal!70!black] (0,-.1) rectangle (.4,.35);}}
\addlegendentry{ThalamusDB}
\addlegendimage{legend image code/.code={
  \fill[colFlock]  (0,-.1) rectangle (.4,.35);
  \fill[colFlock]  (.4,-.1) rectangle (.8,.35);
  \filldraw[pattern=north east lines,pattern color=black,draw=colFlock!70!black] (.4,-.1) rectangle (.8,.35);
  \draw[colFlock!70!black] (0,-.1) rectangle (.4,.35);}}
\addlegendentry{FlockMTL}
\addlegendimage{legend image code/.code={
  \fill[colPalm]   (0,-.1) rectangle (.4,.35);
  \fill[colPalm]   (.4,-.1) rectangle (.8,.35);
  \filldraw[pattern=north east lines,pattern color=black,draw=colPalm!70!black] (.4,-.1) rectangle (.8,.35);
  \draw[colPalm!70!black] (0,-.1) rectangle (.4,.35);}}
\addlegendentry{Palimpzest}
\addlegendimage{legend image code/.code={
  \fill[colAba]   (0,-.1) rectangle (.4,.35);
  \fill[colAba]   (.4,-.1) rectangle (.8,.35);
  \filldraw[pattern=north east lines,pattern color=black,draw=colAba!70!black] (.4,-.1) rectangle (.8,.35);
  \draw[colAba!70!black] (0,-.1) rectangle (.4,.35);}}
\addlegendentry{Abacus}
\end{axis}
\end{tikzpicture}%
}%
}
\vspace{-3ex}

\subfigure[Cost Q1--Q15]{%
\resizebox{0.95\linewidth}{!}{%
\begin{tikzpicture}[trim axis left, trim axis right]
\begin{axis}[FIGALLbot,
  enlarge x limits=0.04,
  symbolic x coords={Q1,Q2,Q3,Q4,Q5,Q6,Q7,Q8,Q9,Q10,Q11,Q12,Q13,Q14,Q15},
  xtick={Q1,Q2,Q3,Q4,Q5,Q6,Q7,Q8,Q9,Q10,Q11,Q12,Q13,Q14,Q15},
  ymode=log, log origin=infty,
  ymin=0.00008, ymax=10,
  ytick={0.0001,0.001,0.01,0.1,1,10},
  yticklabels={$10^{-4}$,$10^{-3}$,$10^{-2}$,$10^{-1}$,$10^{0}$,$10^{1}$},
  ylabel={Cost (\$)},
  extra x ticks={Q2,Q4,Q6,Q8,Q10,Q12,Q14},
  extra x tick style={grid=major, grid style={draw=gray!15, line width=1.7cm}},
  extra x tick labels={},
]
\addplot[nonebar,fill=colNone,draw=colNone!70!black,postaction={pattern=north east lines,pattern color=black}] coordinates{
  (Q1,0.050356)(Q2,0.036261)(Q3,0.022509)
  (Q4,0.140485)(Q5,0.000645)(Q6,0.088432)
  (Q7,0.095528)(Q8,0.095702)(Q9,0.088049)
  (Q10,0.015840)(Q11,0.044204)(Q12,0.013978)
  (Q13,0.040655)(Q14,0.030009)(Q15,0.782416)};
\addplot[costbar,fill=colCost,draw=colCost!70!black,postaction={pattern=north east lines,pattern color=black}] coordinates{
  (Q1,0.048858)(Q2,0.035716)(Q3,0.023935)
  (Q4,0.103578)(Q5,0.001025)(Q6,0.004653)
  (Q7,0.012006)(Q8,0.000644)(Q9,0.000565)
  (Q10,0.012625)(Q11,0.007755)(Q12,0.004677)
  (Q13,0.002848)(Q14,0.000906)(Q15,0.067283)};
\addplot[pullbar,fill=colPull,draw=colPull!70!black,postaction={pattern=north east lines,pattern color=black}] coordinates{
  (Q1,0.050970)(Q2,0.035713)(Q3,0.022779)
  (Q4,0.102558)(Q5,0.000585)(Q6,0.004671)
  (Q7,0.011838)(Q8,0.000612)(Q9,0.000527)
  (Q10,0.010743)(Q11,0.008426)(Q12,0.004873)
  (Q13,0.002572)(Q14,0.001173)(Q15,0.066313)};
\addplot[thalbar,fill=colThal,draw=colThal!70!black,postaction={pattern=north east lines,pattern color=black}] coordinates{
  (Q1,0)(Q2,0)(Q3,0)(Q4,0)(Q5,0.000074)(Q6,0.036188)
  (Q7,0)(Q8,0.003517)(Q9,0.044451)(Q10,0.003979)(Q11,0.003979)(Q12,0.003047)
  (Q13,0)(Q14,0.004038)(Q15,0)};
\addplot[flockbar,fill=colFlock,draw=colFlock!70!black,postaction={pattern=north east lines,pattern color=black}] coordinates{
  (Q1,0.021500)(Q2,0)(Q3,0.007900)(Q4,0)(Q5,0.001000)(Q6,0)
  (Q7,0)(Q8,0)(Q9,0.049900)(Q10,0.002200)(Q11,0)(Q12,0.002200)
  (Q13,0)(Q14,0)(Q15,0)};
\addplot[palmbar,fill=colPalm,draw=colPalm!70!black,postaction={pattern=north east lines,pattern color=black}] coordinates{
  (Q1,0)(Q2,0)(Q3,0)(Q4,0)(Q5,0.000688)(Q6,0.089826)
  (Q7,0)(Q8,0.086376)(Q9,0)(Q10,0.006342)(Q11,0.020079)(Q12,0.006151)
  (Q13,0.018877)(Q14,0.014106)(Q15,0)};
\addplot[ababar,fill=colAba,draw=colAba!70!black,postaction={pattern=north east lines,pattern color=black}] coordinates{
  (Q1,0)(Q2,0)(Q3,0)(Q4,0)(Q5,0.001337)(Q6,0.235676)
  (Q7,0)(Q8,0.227808)(Q9,0.229677)(Q10,0.024677)(Q11,0.118774)(Q12,0.025597)
  (Q13,0.121736)(Q14,0.039488)(Q15,0)};
\addplot[failmark, draw=colThal, xshift=0pt] coordinates{
  (Q1,0.00011)(Q2,0.00011)(Q3,0.00011)(Q4,0.00011)(Q7,0.00011)(Q13,0.00011)(Q15,0.00011)};
\addplot[failmark, draw=colFlock, xshift=6pt] coordinates{
  (Q2,0.00011)(Q4,0.00011)(Q6,0.00011)(Q7,0.00011)
  (Q8,0.00011)(Q11,0.00011)(Q13,0.00011)(Q14,0.00011)(Q15,0.00011)};
\addplot[failmark, draw=colPalm, xshift=12pt] coordinates{
  (Q1,0.00011)(Q2,0.00011)(Q3,0.00011)(Q4,0.00011)(Q7,0.00011)(Q9,0.00011)(Q15,0.00011)};
\addplot[failmark, draw=colAba, xshift=18pt] coordinates{
  (Q1,0.00011)(Q2,0.00011)(Q3,0.00011)(Q4,0.00011)(Q7,0.00011)(Q15,0.00011)};
\addplot[failmark, draw=colCost, xshift=-12pt] coordinates{
};
\addplot[failmark, draw=colPull, xshift=-6pt] coordinates{
};
\end{axis}
\end{tikzpicture}%
}%
}
\vspace{-3ex}

\subfigure[Latency Q16--Q30]{%
\resizebox{0.95\linewidth}{!}{%
\begin{tikzpicture}[trim axis left, trim axis right]
\begin{axis}[FIGALLbot,
  enlarge x limits=0.04,
  symbolic x coords={Q16,Q17,Q18,Q19,Q20,Q21,Q22,Q23,Q24,Q25,Q26,Q27,Q28,Q29,Q30},
  xtick={Q16,Q17,Q18,Q19,Q20,Q21,Q22,Q23,Q24,Q25,Q26,Q27,Q28,Q29,Q30},
  ymode=log, log origin=infty,
  ymin=1, ymax=5000,
  ytick={1,10,100,1000},
  yticklabels={$10^0$,$10^1$,$10^2$,$10^3$},
  ylabel={Time (s)},
  extra x ticks={Q17,Q19,Q21,Q23,Q25,Q27,Q29},
  extra x tick style={grid=major, grid style={draw=gray!15, line width=1.7cm}},
  extra x tick labels={},
]
\addplot[nonebar,fill=colNone,draw=colNone!70!black] coordinates{
  (Q16,81.26)(Q17,208.66)(Q18,26.24)
  (Q19,626.54)(Q20,4.72)(Q21,8.17)
  (Q22,12.17)(Q23,7.89)(Q24,11.80)
  (Q25,76.07)(Q26,255.06)(Q27,115.63)
  (Q28,83.29)(Q29,11.33)(Q30,33.40)};
\addplot[costbar,fill=colCost,draw=colCost!70!black] coordinates{
  (Q16,0)(Q17,181.49)(Q18,30.09)
  (Q19,442.97)(Q20,4.54)(Q21,7.76)
  (Q22,5.60)(Q23,11.87)(Q24,9.25)
  (Q25,0)(Q26,75.80)(Q27,78.99)
  (Q28,30.97)(Q29,11.26)(Q30,28.20)};
\addplot[pullbar,fill=colPull,draw=colPull!70!black] coordinates{
  (Q16,102.85)(Q17,1405.61)(Q18,89.23)
  (Q19,0)(Q20,2.85)(Q21,7.40)
  (Q22,7.01)(Q23,10.14)(Q24,9.25)
  (Q25,111.95)(Q26,75.22)(Q27,93.98)
  (Q28,28.90)(Q29,52.66)(Q30,27.64)};
\addplot[thalbar,fill=colThal,draw=colThal!70!black] coordinates{
  (Q16,0)(Q17,0)(Q18,226.56)(Q19,0)(Q20,0)(Q21,8.59)
  (Q22,0)(Q23,9.56)(Q24,9.38)(Q25,0)(Q26,0)(Q27,0)
  (Q28,0)(Q29,1.12)(Q30,0)};
\addplot[flockbar,fill=colFlock,draw=colFlock!70!black] coordinates{
  (Q16,0)(Q17,0)(Q18,0)(Q19,0)(Q20,3.26)(Q21,0)
  (Q22,25.51)(Q23,0)(Q24,0)(Q25,0)(Q26,0)(Q27,0)
  (Q28,0)(Q29,2.72)(Q30,20.09)};
\addplot[palmbar,fill=colPalm,draw=colPalm!70!black] coordinates{
  (Q16,53.26)(Q17,0)(Q18,29.68)(Q19,0)(Q20,5.25)(Q21,16.00)
  (Q22,7.14)(Q23,6.92)(Q24,3.65)(Q25,0)(Q26,63.73)(Q27,282.47)
  (Q28,0)(Q29,0)(Q30,0)};
\addplot[ababar,fill=colAba,draw=colAba!70!black] coordinates{
  (Q16,139.99)(Q17,0)(Q18,0)(Q19,0)(Q20,827.26)(Q21,656.21)
  (Q22,691.01)(Q23,0)(Q24,1063.64)(Q25,0)(Q26,188.38)(Q27,329.53)
  (Q28,0)(Q29,0)(Q30,0)};
\addplot[failmark, draw=colThal, xshift=0pt] coordinates{
  (Q16,1.3)(Q17,1.3)(Q19,1.3)(Q20,1.3)
  (Q22,1.3)(Q25,1.3)(Q26,1.3)(Q27,1.3)(Q28,1.3)(Q30,1.3)};
\addplot[failmark, draw=colFlock, xshift=6pt] coordinates{
  (Q16,1.3)(Q17,1.3)(Q18,1.3)(Q19,1.3)(Q21,1.3)(Q23,1.3)
  (Q24,1.3)(Q25,1.3)(Q26,1.3)(Q27,1.3)(Q28,1.3)};
\addplot[failmark, draw=colPalm, xshift=12pt] coordinates{
  (Q17,1.3)(Q19,1.3)
  (Q25,1.3)(Q28,1.3)(Q29,1.3)(Q30,1.3)};
\addplot[failmark, draw=colAba, xshift=18pt] coordinates{
  (Q17,1.3)(Q18,1.3)(Q19,1.3)(Q23,1.3)
  (Q25,1.3)(Q28,1.3)(Q29,1.3)(Q30,1.3)};
\addplot[failmark, draw=colCost, xshift=-12pt] coordinates{
  (Q16,1.3)(Q25,1.3)};
\addplot[failmark, draw=colPull, xshift=-6pt] coordinates{
  (Q19,1.3)};
\end{axis}
\end{tikzpicture}%
}%
}

\vspace{-3ex}

\subfigure[Cost Q16--Q30]{%
\resizebox{0.95\linewidth}{!}{%
\begin{tikzpicture}[trim axis left, trim axis right]
\begin{axis}[FIGALLbot,
  enlarge x limits=0.04,
  symbolic x coords={Q16,Q17,Q18,Q19,Q20,Q21,Q22,Q23,Q24,Q25,Q26,Q27,Q28,Q29,Q30},
  xtick={Q16,Q17,Q18,Q19,Q20,Q21,Q22,Q23,Q24,Q25,Q26,Q27,Q28,Q29,Q30},
  ymode=log, log origin=infty,
  ymin=0.00008, ymax=10,
  ytick={0.0001,0.001,0.01,0.1,1,10},
  yticklabels={$10^{-4}$,$10^{-3}$,$10^{-2}$,$10^{-1}$,$10^{0}$,$10^{1}$},
  ylabel={Cost (\$)},
  extra x ticks={Q17,Q19,Q21,Q23,Q25,Q27,Q29},
  extra x tick style={grid=major, grid style={draw=gray!15, line width=1.7cm}},
  extra x tick labels={},
]
\addplot[nonebar,fill=colNone,draw=colNone!70!black,postaction={pattern=north east lines,pattern color=black}] coordinates{
  (Q16,0.796040)(Q17,2.271902)(Q18,0.051324)
  (Q19,7.168478)(Q20,0.002524)(Q21,0.031397)
  (Q22,0.036050)(Q23,0.040919)(Q24,0.049782)
  (Q25,0.818454)(Q26,1.586768)(Q27,0.777081)
  (Q28,0.291956)(Q29,0.006276)(Q30,0.043414)};
\addplot[costbar,fill=colCost,draw=colCost!70!black,postaction={pattern=north east lines,pattern color=black}] coordinates{
  (Q16,0)(Q17,1.304349)(Q18,0.046342)
  (Q19,5.143268)(Q20,0.001203)(Q21,0.030118)
  (Q22,0.004493)(Q23,0.043277)(Q24,0.033461)
  (Q25,0)(Q26,0.259580)(Q27,0.126442)
  (Q28,0.028411)(Q29,0.005589)(Q30,0.009519)};
\addplot[pullbar,fill=colPull,draw=colPull!70!black,postaction={pattern=north east lines,pattern color=black}] coordinates{
  (Q16,0.087848)(Q17,1.313800)(Q18,0.048011)
  (Q19,0)(Q20,0.001361)(Q21,0.029906)
  (Q22,0.005474)(Q23,0.043755)(Q24,0.031825)
  (Q25,0.093671)(Q26,0.258850)(Q27,0.137817)
  (Q28,0.028400)(Q29,0.006328)(Q30,0.005795)};
\addplot[thalbar,fill=colThal,draw=colThal!70!black,postaction={pattern=north east lines,pattern color=black}] coordinates{
  (Q16,0)(Q17,0)(Q18,0.040328)(Q19,0)(Q20,0)(Q21,0.007009)
  (Q22,0)(Q23,0.007506)(Q24,0.007084)(Q25,0)(Q26,0)(Q27,0)
  (Q28,0)(Q29,0.000153)(Q30,0)};
\addplot[flockbar,fill=colFlock,draw=colFlock!70!black,postaction={pattern=north east lines,pattern color=black}] coordinates{
  (Q16,0)(Q17,0)(Q18,0)(Q19,0)(Q20,0.000900)(Q21,0)
  (Q22,0.007800)(Q23,0)(Q24,0)(Q25,0)(Q26,0)(Q27,0)
  (Q28,0)(Q29,0.000700)(Q30,0.003800)};
\addplot[palmbar,fill=colPalm,draw=colPalm!70!black,postaction={pattern=north east lines,pattern color=black}] coordinates{
  (Q16,0.582940)(Q17,0)(Q18,0.033649)(Q19,0)(Q20,0.002173)(Q21,0.019891)
  (Q22,0.008944)(Q23,0.031007)(Q24,0.008652)(Q25,0)(Q26,0.617773)(Q27,2.213130)
  (Q28,0)(Q29,0)(Q30,0)};
\addplot[ababar,fill=colAba,draw=colAba!70!black,postaction={pattern=north east lines,pattern color=black}] coordinates{
  (Q16,0.981143)(Q17,0)(Q18,0)(Q19,0)(Q20,0.092141)(Q21,0.108323)
  (Q22,0.123041)(Q23,0)(Q24,0.125783)(Q25,0)(Q26,1.199583)(Q27,2.480959)
  (Q28,0)(Q29,0)(Q30,0)};
\addplot[failmark, draw=colThal, xshift=0pt] coordinates{
  (Q16,0.00011)(Q17,0.00011)(Q19,0.00011)(Q20,0.00011)
  (Q22,0.00011)(Q25,0.00011)(Q26,0.00011)(Q27,0.00011)(Q28,0.00011)(Q30,0.00011)};
\addplot[failmark, draw=colFlock, xshift=6pt] coordinates{
  (Q16,0.00011)(Q17,0.00011)(Q18,0.00011)(Q19,0.00011)(Q21,0.00011)
  (Q23,0.00011)(Q24,0.00011)(Q25,0.00011)(Q26,0.00011)(Q27,0.00011)(Q28,0.00011)};
\addplot[failmark, draw=colPalm, xshift=12pt] coordinates{
  (Q17,0.00011)(Q19,0.00011)
  (Q25,0.00011)(Q28,0.00011)(Q29,0.00011)(Q30,0.00011)};
\addplot[failmark, draw=colAba, xshift=18pt] coordinates{
  (Q17,0.00011)(Q18,0.00011)(Q19,0.00011)(Q23,0.00011)
  (Q25,0.00011)(Q28,0.00011)(Q29,0.00011)(Q30,0.00011)};
\addplot[failmark, draw=colCost, xshift=-12pt] coordinates{
  (Q16,0.00011)(Q25,0.00011)};
\addplot[failmark, draw=colPull, xshift=-6pt] coordinates{
  (Q19,0.00011)};
\end{axis}
\end{tikzpicture}%
}%
}

\caption{Per-query latency (s) and LLM cost (\$) on log scale across the 30-query hybrid benchmark.
  Subfigures (a, c) show latency; subfigures (b, d) show cost (hatched bars).
  Bars are shown only for methods with F1 $\geq$ 0.4 on that query; an $\times$ indicates that a method either did not meet the accuracy threshold, timed out, or ran out of memory.}
\Description{Bar charts showing per-query latency and LLM cost for all systems across the 30-query hybrid benchmark.}
\label{fig:latency-cost-all}
\end{figure*}
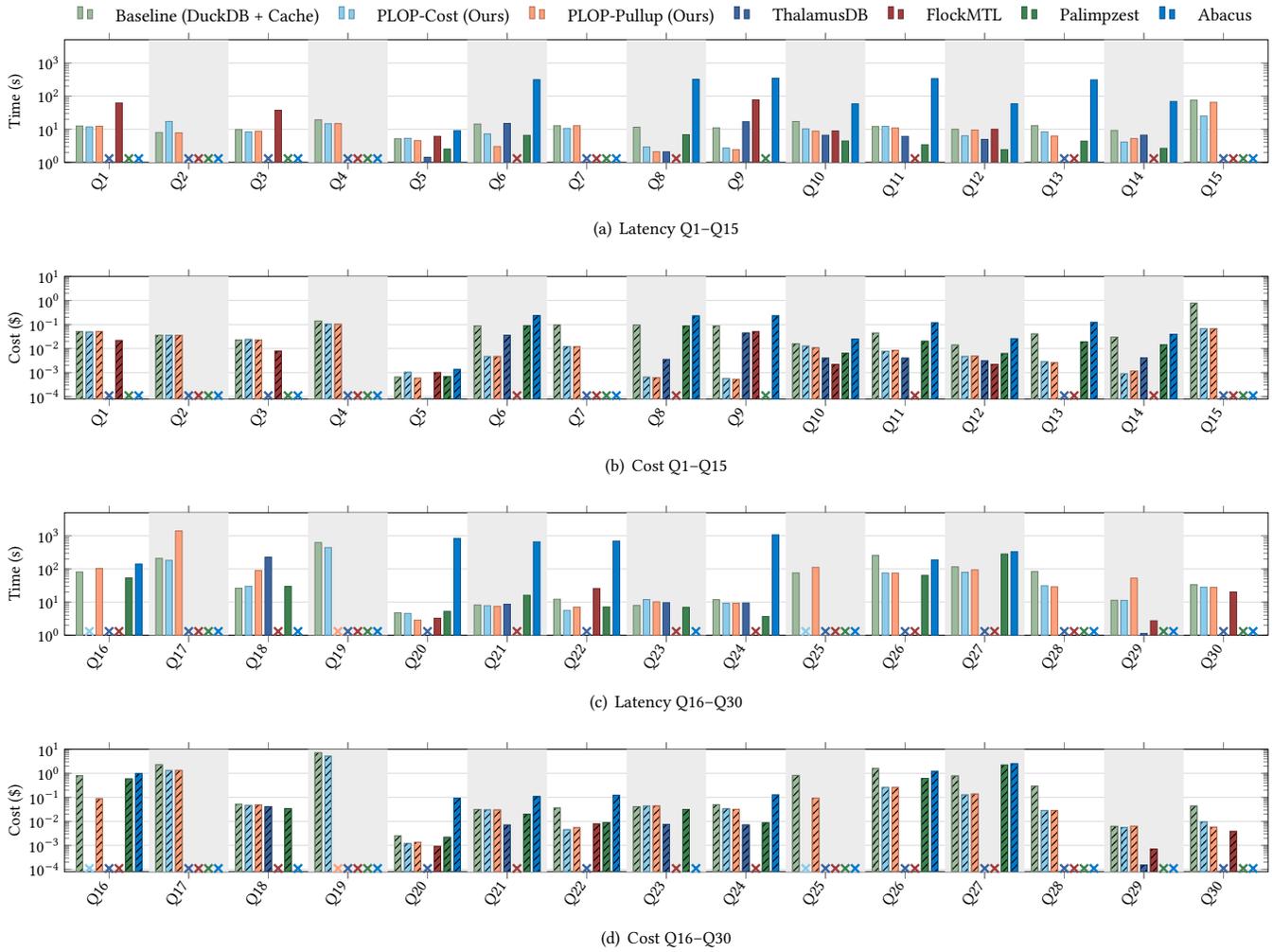

We run each system on all 30 queries and report speedup, cost reduction, and F1 scores against the DuckDB + Cache baseline.
\Cref{tab:summary} summarizes the results.
Overall, our cost model (i.e., \textsc{PLOP}-Cost) achieves $1.50\times$ speedup and $4.18\times$ cost reduction, while our pulling-up algorithm (i.e., \textsc{PLOP}-Pullup) achieves $1.16\times$ speedup and $4.29\times$ cost reduction. 
Both maintain an average F1 of $\approx 0.85$, substantially higher than all other systems.

Among the baselines, ThalamusDB achieves the highest cost reduction at $10.10\times$ but with only an F1 score of $0.52$, indicating that roughly half the results are incorrect.
FlockMTL is slower than the baseline at $0.64\times$ with the lowest accuracy at $0.34$ F1.
Palimpzest achieves the fastest execution at $1.70\times$ but only moderate cost reduction at $2.15\times$ and $0.53$ F1.
Abacus is both slower at $0.15\times$ and more expensive at $0.43\times$ than the baseline.

\Cref{fig:latency-cost-all} shows per-query latency and cost on a log scale.
\textsc{PLOP} variants consistently produce valid results across all 30 benchmark queries, while competing systems frequently produce unusable outputs on multi-table queries.
We use F1 $\geq 0.4$ as the threshold for showing a bar in \Cref{fig:latency-cost-all}: below this level, more than half the result rows are incorrect, making the output unreliable for downstream use.
Systems that fail this threshold are marked with $\times$.
The advantage of \textsc{PLOP}'s plan-level optimization is most pronounced on complex queries with 5+ joins and multiple $\SF$s, where the interplay between LLM cost and relational cost creates a significant optimization opportunity.

\subsection{SemBench E-Commerce Results}
\label{sec:exp-sembench}

\begin{table}[t]
\vspace{2ex}
\centering
\caption{SemBench E-Commerce performance (14 queries). Quality is the arithmetic mean of the official per-query benchmark scores, where different queries use different metrics (either F1 or Adjusted Rand Index). Failed or timed-out queries receive a score of 0.}
\label{tab:ecomm_perf}
\small
\renewcommand{\arraystretch}{1.15}
\resizebox{0.9\linewidth}{!}{%
\begin{tabular}{l r r r}
\toprule
\textbf{Method} & \textbf{Speedup} & \textbf{Cost Red.} & \textbf{Quality} \\
\midrule
\rowcolor{gray!12} Baseline (DuckDB + Cache) & 1.00$\times$ & 1.00$\times$ & 0.865 \\[1pt]
\rowcolor{colCost!12} \textsc{PLOP}-Cost & 1.11$\times$ & 1.04$\times$ & 0.840 \\
\rowcolor{colCost!12} \textsc{PLOP}-Pullup & 1.10$\times$ & 1.04$\times$ & 0.796 \\[1pt]
ThalamusDB$^\dagger$ & 0.40$\times$ & 0.30$\times$ & 0.140 \\
Palimpzest & 2.26$\times$ & 0.29$\times$ & 0.500 \\
Lotus & 2.27$\times$ & 1.67$\times$ & 0.574 \\
Abacus & 0.67$\times$ & 0.48$\times$ & 0.309 \\
\bottomrule
\end{tabular}%
}
\par\vspace{2pt}

{\footnotesize $^\dagger$\, ThalamusDB is not available for q3, q5, q6, q10--q14 in SemBench due to the lack of support for cross-modal semantic join and $\SP$~\cite{lao2026sembenchbenchmarksemanticquery}  } 
\end{table}

The F1 scores on the 30-query benchmark are measured by comparing with the baseline output. To validate accuracy against human annotations, we evaluate on SemBench E-Commerce (\Cref{tab:ecomm_perf}).
\textsc{PLOP}-Cost achieves $0.840$ quality score defined by SemBench, close to the baseline's score of $0.865$, with a speedup of $1.11\times$.
These queries are simple with few joins, leaving little room for placement optimization. 
The observed gains mainly come from pulling up semantic filters through traditional relational operators (e.g., joins), leveraging function caching to reduce redundant LLM invocations.
The small gains confirm that \textsc{PLOP} preserves accuracy on simple queries.
Lotus and Palimpzest are faster but return lower quality results, while ThalamusDB and Abacus fall below the baseline on all metrics.

\subsection{Pull-up vs.\ Cost Model}
\label{sec:exp-pullup-vs-cost}

\begin{figure}[t]
\vspace{10pt}
\begin{lstlisting}[
  language=SQL, basicstyle=\ttfamily\small, frame=tb, framerule=0.4pt,
  rulecolor=\color{black}, breaklines=true, showstringspaces=false,
  morekeywords={SEMANTIC}, keywordstyle=\bfseries, stringstyle=\itshape,
  numbers=left, numberstyle=\tiny\color{gray}, numbersep=6pt,
  xleftmargin=14pt, xrightmargin=2pt, aboveskip=4pt, belowskip=4pt,
  captionpos=b, abovecaptionskip=8pt,
  caption={Q8: Semantic book lookup with filtered reviews (abbreviated).},
  label=lst:q8,
]
WITH candidates AS (
  SELECT CAST(SPLIT_PART(b.book_id,'_',2) AS INTEGER)
           AS book_idx, b.book_id, b.title
  FROM books_info b
  WHERE b.title IS NOT NULL
    AND SEMANTIC('Confirm this is the second edition of
      Make: Electronics ... Title: {b.title}
      Subtitle: {b.subtitle} Author: {b.author}
      Categories: {b.categories}')
),
reviews_filtered AS (
  SELECT * FROM reviews r
  WHERE r.verified_purchase = 1
    AND CAST(r.rating AS DOUBLE) >= 5
    AND r.helpful_vote >= 50
    AND r.review_time >= TIMESTAMP '2017-01-01'
    AND r.review_time <  TIMESTAMP '2018-01-01'
)
SELECT * FROM candidates c
JOIN reviews_filtered rf
  ON rf.purchase_idx = c.book_idx
ORDER BY rf.review_time DESC;
\end{lstlisting}
\end{figure}

\begin{figure}[t]
\vspace{10pt}
\begin{lstlisting}[
  language=SQL, basicstyle=\ttfamily\small, frame=tb, framerule=0.4pt,
  rulecolor=\color{black}, breaklines=true, showstringspaces=false,
  morekeywords={SEMANTIC}, keywordstyle=\bfseries, stringstyle=\itshape,
  numbers=left, numberstyle=\tiny\color{gray}, numbersep=6pt,
  xleftmargin=14pt, xrightmargin=2pt, aboveskip=4pt, belowskip=4pt,
  captionpos=b, abovecaptionskip=8pt,
  caption={Q19: Multi-join audit query with two \texttt{SEMANTIC} filters (abbreviated).},
  label=lst:q19,
]
WITH lineitem_returns AS (
  SELECT * FROM lineitem l
  WHERE l.l_shipdate BETWEEN DATE '1994-01-01'
                         AND DATE '1998-01-01'
    AND l.l_returnflag IN ('R','A','N')
    AND l.l_quantity BETWEEN 3 AND 38
    AND SEMANTIC('Mode: {l.l_shipmode}
      Instruction: {l.l_shipinstruct}
      Is this a potentially problematic fulfillment
      case? Answer YES or NO.')
),
order_customer AS (
  SELECT * FROM orders o
  JOIN customer c ON c.c_custkey = o.o_custkey
  WHERE o.o_orderdate BETWEEN DATE '1994-01-01'
                          AND DATE '1998-01-01'
    AND o.o_orderstatus IN ('O','F')
    AND o.o_totalprice > 20000
),
part_supplier AS (
  SELECT * FROM part p
  JOIN partsupp ps ON ps.ps_partkey = p.p_partkey
  JOIN supplier s  ON s.s_suppkey  = ps.ps_suppkey
  WHERE p.p_size BETWEEN 1 AND 40
),
customer_context AS (
  SELECT * FROM customer_sample c
  WHERE SEMANTIC('Segment: {c.c_mktsegment}
      Balance: {c.c_acctbal}
      Higher complaint/escalation risk? YES or NO.')
)
SELECT * FROM lineitem_returns lr
JOIN order_customer  oc ON oc.o_orderkey = lr.l_orderkey
JOIN part_supplier   ps ON ps.p_partkey  = lr.l_partkey
                       AND ps.s_suppkey  = lr.l_suppkey
CROSS JOIN customer_context cc;
\end{lstlisting}
\end{figure}

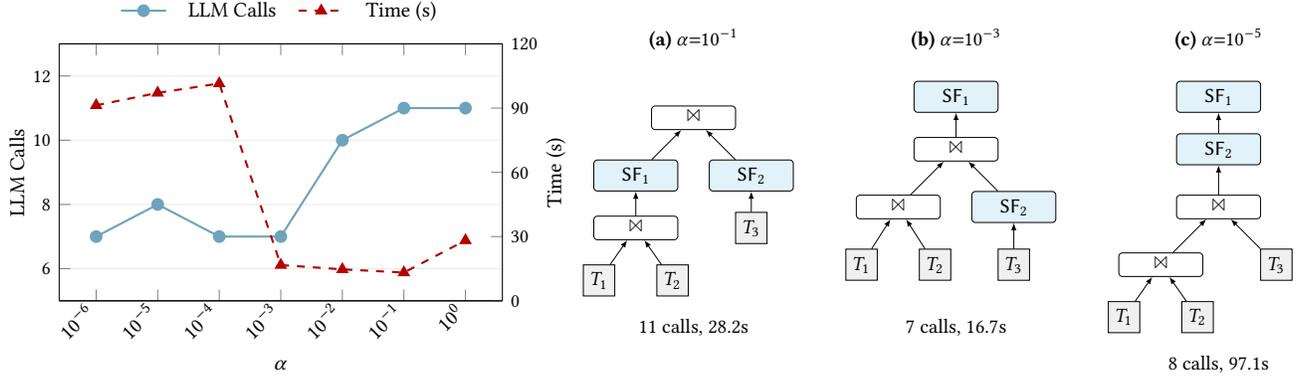
\begin{figure*}[t]
\centering
\begin{minipage}[b]{0.42\linewidth}
\centering
\begin{tikzpicture}
\pgfplotsset{every tick label/.append style={font=\footnotesize}}
\begin{axis}[
  name=ax1,
  width=\linewidth,
  height=5.0cm,
  xlabel={$\alpha$},
  xlabel style={font=\small},
  ylabel={LLM Calls},
  ylabel style={font=\small, yshift=-3pt},
  symbolic x coords={$10^{-6}$,$10^{-5}$,$10^{-4}$,$10^{-3}$,$10^{-2}$,$10^{-1}$,$10^{0}$},
  xtick=data,
  xticklabel style={font=\small, rotate=45, anchor=east},
  ymin=5, ymax=13,
  ytick={6,8,10,12},
  axis y line*=left,
  ymajorgrids=true,
  grid style={draw=gray!15},
  legend style={
    font=\small,
    at={(0.5,1.05)},
    anchor=south,
    draw=none,
    legend columns=2,
    column sep=6pt,
  },
  mark size=2pt,
]
\addplot[mark=*, thick, colCost!80!black, mark options={solid}] coordinates{
  ($10^{-6}$, 7) ($10^{-5}$, 8) ($10^{-4}$, 7) ($10^{-3}$, 7) ($10^{-2}$, 10) ($10^{-1}$, 11) ($10^{0}$, 11)};
\addlegendentry{LLM Calls}
\addplot[mark=triangle*, thick, red!70!black, dashed, mark options={solid}] coordinates{
  ($10^{-6}$, 0) ($10^{-5}$, 0) ($10^{-4}$, 0) ($10^{-3}$, 0) ($10^{-2}$, 0) ($10^{-1}$, 0) ($10^{0}$, 0)};
\addlegendentry{Time (s)}
\end{axis}

\begin{axis}[
  width=\linewidth,
  height=5.0cm,
  symbolic x coords={$10^{-6}$,$10^{-5}$,$10^{-4}$,$10^{-3}$,$10^{-2}$,$10^{-1}$,$10^{0}$},
  xtick=\empty,
  axis y line*=right,
  ylabel={Time (s)},
  ylabel style={font=\small, yshift=3pt},
  yticklabel style={font=\footnotesize},
  ymin=0, ymax=120,
  ytick={0,30,60,90,120},
  mark size=2pt,
]
\addplot[mark=triangle*, thick, red!70!black, dashed, mark options={solid}] coordinates{
  ($10^{-6}$, 91.3) ($10^{-5}$, 97.1) ($10^{-4}$, 101.5) ($10^{-3}$, 16.7) ($10^{-2}$, 14.7) ($10^{-1}$, 13.2) ($10^{0}$, 28.2)};
\end{axis}
\end{tikzpicture}
\end{minipage}%
\hspace{5pt}
\begin{minipage}[b]{0.53\linewidth}
\centering
\resizebox{\linewidth}{!}{%
\begin{tikzpicture}[
    op/.style={draw, rounded corners=2pt, font=\Large, inner sep=4pt, minimum width=1.6cm, align=center},
    sem/.style={op, fill=colCost!25},
    tbl/.style={draw, fill=gray!12, font=\Large, inner sep=4pt, align=center},
    lbl/.style={font=\Large\bfseries},
    ann/.style={font=\Large, align=center},
    every edge/.style={draw, -latex, thick},
]

\node[lbl] at (0, 4.5) {(a) $\alpha{=}10^{-1}$};
\node[op]  (A-jt)  at (0, 3.0) {$\Join$};
\node[sem] (A-sf1) at (-1.1, 1.9) {$\SF_1$};
\node[op]  (A-jl)  at (-1.1, 0.9) {$\Join$};
\node[tbl] (A-t1)  at (-1.8, -0.1) {$T_1$};
\node[tbl] (A-t2)  at (-0.4, -0.1) {$T_2$};
\node[sem] (A-sf2) at (1.1, 1.9) {$\SF_2$};
\node[tbl] (A-t3)  at (1.1, 0.9) {$T_3$};
\draw[-latex] (A-t1) -- (A-jl);
\draw[-latex] (A-t2) -- (A-jl);
\draw[-latex] (A-jl) -- (A-sf1);
\draw[-latex] (A-sf1) -- (A-jt);
\draw[-latex] (A-t3) -- (A-sf2);
\draw[-latex] (A-sf2) -- (A-jt);
\node[ann] at (0, -1.0) {11 calls, 28.2s};

\node[lbl] at (5.0, 4.5) {(b) $\alpha{=}10^{-3}$};
\node[sem] (B-sf1) at (5.0, 3.4) {$\SF_1$};
\node[op]  (B-jt)  at (5.0, 2.4) {$\Join$};
\node[op]  (B-jl)  at (3.9, 1.3) {$\Join$};
\node[tbl] (B-t1)  at (3.2, 0.2) {$T_1$};
\node[tbl] (B-t2)  at (4.6, 0.2) {$T_2$};
\node[sem] (B-sf2) at (6.1, 1.3) {$\SF_2$};
\node[tbl] (B-t3)  at (6.1, 0.2) {$T_3$};
\draw[-latex] (B-t1) -- (B-jl);
\draw[-latex] (B-t2) -- (B-jl);
\draw[-latex] (B-jl) -- (B-jt);
\draw[-latex] (B-t3) -- (B-sf2);
\draw[-latex] (B-sf2) -- (B-jt);
\draw[-latex] (B-jt) -- (B-sf1);
\node[ann] at (5.0, -1.0) {7 calls, 16.7s};

\node[lbl] at (10.0, 4.5) {(c) $\alpha{=}10^{-5}$};
\node[sem] (C-sf1) at (10.0, 3.4) {$\SF_1$};
\node[sem] (C-sf2) at (10.0, 2.4) {$\SF_2$};
\node[op]  (C-jt)  at (10.0, 1.3) {$\Join$};
\node[op]  (C-jl)  at (8.9, 0.2) {$\Join$};
\node[tbl] (C-t1)  at (8.2, -0.8) {$T_1$};
\node[tbl] (C-t2)  at (9.6, -0.8) {$T_2$};
\node[tbl] (C-t3)  at (11.1, 0.2) {$T_3$};
\draw[-latex] (C-t1) -- (C-jl);
\draw[-latex] (C-t2) -- (C-jl);
\draw[-latex] (C-jl) -- (C-jt);
\draw[-latex] (C-t3) -- (C-jt);
\draw[-latex] (C-jt) -- (C-sf2);
\draw[-latex] (C-sf2) -- (C-sf1);
\node[ann] at (10.0, -1.7) {8 calls, 97.1s};

\end{tikzpicture}%
}
\end{minipage}
\caption{Sensitivity to $\alpha$. Left: LLM calls and latency as $\alpha$ varies. Right: plan trees under three settings.
(a)~Large $\alpha$: both filters pushed down, 11 calls but 28.2s.
(b)~Moderate $\alpha$: $\SF_1$ pulled above the top join, 7 calls at 16.7s.
(c)~Small $\alpha$: both filters pulled up, 8 calls but 97.1s as unfiltered joins dominate.}
\Description{Left: dual-axis line plot of LLM calls and time vs alpha. Right: three plan trees under different alpha.}
\label{fig:alpha-sensitivity}
\end{figure*}

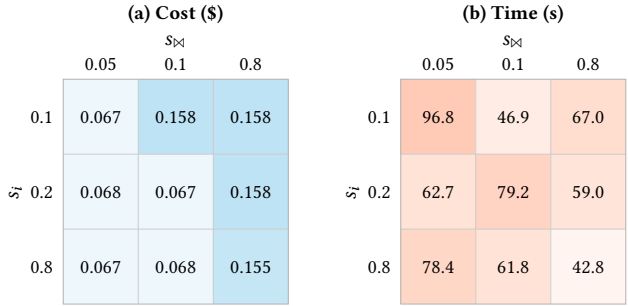
\begin{figure}[t]
\centering
\resizebox{1\columnwidth}{!}{%
\begin{tikzpicture}[
  hcell/.style={minimum width=1.2cm, minimum height=1.2cm, align=center, font=\normalsize, inner sep=0pt},
  hdr/.style={font=\normalsize},
  ttl/.style={font=\normalsize\bfseries},
]

\path[use as bounding box] (-1.0, -0.6) rectangle (9.2, 4.3);


\node[ttl] at (1.8, 3.9) {(a) Cost (\$)};
\node[hdr] at (1.8, 3.45) {{\large $s_{\Join}$}};
\node[hdr] at (0.6, 3.1) {0.05};
\node[hdr] at (1.8, 3.1) {0.1};
\node[hdr] at (3.0, 3.1) {0.8};
\node[hdr, rotate=90, anchor=south] at (-0.55, 1.05) {{\large $s_i$}};
\node[hdr, anchor=east] at (-0.05, 2.25) {0.1};
\node[hdr, anchor=east] at (-0.05, 1.05) {0.2};
\node[hdr, anchor=east] at (-0.05, -0.15) {0.8};

\node[hcell, fill=colCost!15] at (0.6, 2.25) {0.067};
\node[hcell, fill=colCost!55] at (1.8, 2.25) {0.158};
\node[hcell, fill=colCost!55] at (3.0, 2.25) {0.158};
\node[hcell, fill=colCost!15] at (0.6, 1.05) {0.068};
\node[hcell, fill=colCost!15] at (1.8, 1.05) {0.067};
\node[hcell, fill=colCost!55] at (3.0, 1.05) {0.158};
\node[hcell, fill=colCost!15] at (0.6, -0.15) {0.067};
\node[hcell, fill=colCost!15] at (1.8, -0.15) {0.068};
\node[hcell, fill=colCost!50] at (3.0, -0.15) {0.155};
\draw[gray!40] (0.0,0.45) -- (3.6,0.45);
\draw[gray!40] (0.0,1.65) -- (3.6,1.65);
\draw[gray!40] (1.2,-0.75) -- (1.2,2.85);
\draw[gray!40] (2.4,-0.75) -- (2.4,2.85);
\draw[gray!60] (0.0,-0.75) rectangle (3.6,2.85);

\def\ox{5.4}
\node[ttl] at (\ox+1.8, 3.9) {(b) Time (s)};
\node[hdr] at (\ox+1.8, 3.45) {{\large $s_{\Join}$}};
\node[hdr] at (\ox+0.6, 3.1) {0.05};
\node[hdr] at (\ox+1.8, 3.1) {0.1};
\node[hdr] at (\ox+3.0, 3.1) {0.8};
\node[hdr, rotate=90, anchor=south] at (\ox-0.55, 1.05) {{\large $s_i$}};
\node[hdr, anchor=east] at (\ox-0.05, 2.25) {0.1};
\node[hdr, anchor=east] at (\ox-0.05, 1.05) {0.2};
\node[hdr, anchor=east] at (\ox-0.05, -0.15) {0.8};

\node[hcell, fill=colPull!55] at (\ox+0.6, 2.25) {96.8};
\node[hcell, fill=colPull!20] at (\ox+1.8, 2.25) {46.9};
\node[hcell, fill=colPull!35] at (\ox+3.0, 2.25) {67.0};
\node[hcell, fill=colPull!30] at (\ox+0.6, 1.05) {62.7};
\node[hcell, fill=colPull!45] at (\ox+1.8, 1.05) {79.2};
\node[hcell, fill=colPull!28] at (\ox+3.0, 1.05) {59.0};
\node[hcell, fill=colPull!42] at (\ox+0.6, -0.15) {78.4};
\node[hcell, fill=colPull!30] at (\ox+1.8, -0.15) {61.8};
\node[hcell, fill=colPull!12] at (\ox+3.0, -0.15) {42.8};
\draw[gray!40] (\ox+0.0,0.45) -- (\ox+3.6,0.45);
\draw[gray!40] (\ox+0.0,1.65) -- (\ox+3.6,1.65);
\draw[gray!40] (\ox+1.2,-0.75) -- (\ox+1.2,2.85);
\draw[gray!40] (\ox+2.4,-0.75) -- (\ox+2.4,2.85);
\draw[gray!60] (\ox+0.0,-0.75) rectangle (\ox+3.6,2.85);

\end{tikzpicture}%
}
\caption{Sensitivity to selectivity estimates. (a) LLM cost and (b) wall-clock time as $s_i$ and $s_{\Join}$ vary. The DP cost model produces two plans: low cost when $s_{\Join} \leq 0.05$, higher cost when $s_{\Join}$ increases.}
\Description{Two heatmaps showing cost and time sensitivity to selectivity parameters.}
\label{fig:selectivity-heatmap}
\end{figure}

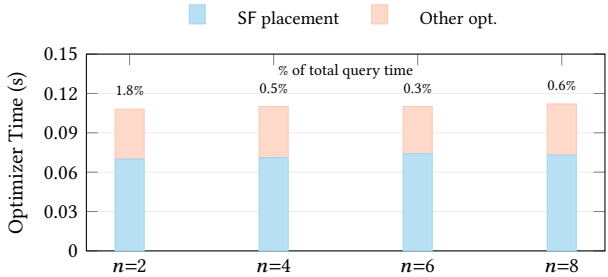
\begin{figure}[t]
\centering
\begin{tikzpicture}
\begin{axis}[
  ybar stacked,
  width=1\columnwidth,
  height=4.2cm,
  bar width=11pt,
  symbolic x coords={$n{=}2$, $n{=}4$, $n{=}6$, $n{=}8$},
  xtick=data,
  xticklabel style={font=\small},
  yticklabel style={font=\small},
  ylabel={Optimizer Time (s)},
  ylabel style={font=\small},
  ymin=0, ymax=0.15,
  ytick={0, 0.03, 0.06, 0.09, 0.12, 0.15},
  yticklabels={0, 0.03, 0.06, 0.09, 0.12, 0.15},
  ymajorgrids=true,
  grid style={draw=gray!15},
  legend style={
    font=\footnotesize,
    at={(0.5,1.08)},
    anchor=south,
    draw=none,
    legend columns=2,
    column sep=10pt,
  },
  clip=false,
]

\addplot[fill=colCost!60, draw=colCost!80] coordinates{
  ($n{=}2$, 0.070) ($n{=}4$, 0.071) ($n{=}6$, 0.074) ($n{=}8$, 0.073)};

\addplot[fill=colPull!40, draw=colPull!60] coordinates{
  ($n{=}2$, 0.038) ($n{=}4$, 0.039) ($n{=}6$, 0.036) ($n{=}8$, 0.039)};

\legend{$\SF$ placement, Other opt.}

\node[font=\scriptsize, anchor=south] at (axis cs:$n{=}2$, 0.112) {1.8\%};
\node[font=\scriptsize, anchor=south] at (axis cs:$n{=}4$, 0.114) {0.5\%};
\node[font=\scriptsize, anchor=south] at (axis cs:$n{=}6$, 0.114) {0.3\%};
\node[font=\scriptsize, anchor=south] at (axis cs:$n{=}8$, 0.116) {0.6\%};
\node[font=\scriptsize, anchor=south] at (rel axis cs:0.5, 0.83) {\% of total query time};

\end{axis}
\end{tikzpicture}
\caption{Optimizer overhead by number of $\SF$s. Percentages above bars show the optimizer's share of total query execution time. The optimizer takes under 0.12s in all cases.}
\Description{Stacked bar chart showing optimizer overhead.}
\label{fig:optimizer-overhead}
\end{figure}

PLOP's two optimization strategies target different query regimes.
We illustrate with two concrete examples.

For simple queries with few joins and one $\SF$, pull-up alone suffices.
Q8 from BookReview (Listing~\ref{lst:q8}) searches for a specific book edition using one $\SF$ over a single join between \texttt{books\_info} and \texttt{reviews}.
With only 1 join, pulling up the $\SF$ adds negligible relational cost while reducing LLM calls from 200 to 1 via pre-filtering tuples by relational filters and the inner-join
Here, \text{PLOP}'s two optimization strategies take a similar time (pull-up takes 2.1s vs.\ the cost model's 2.9s). 

For complex multi-table queries, the two strategies diverge significantly.
Q19 from TPC-H (Listing~\ref{lst:q19}) is a multi-table audit query with 6 joins, 2 $\SF$s, and 4 relational filters spanning \texttt{lineitem}, \texttt{orders}, \texttt{customer}, \texttt{part}, \texttt{partsupp}, and \texttt{supplier}.
Pulling up both $\SF$s forces all 6 joins to process unfiltered tables, causing intermediate results to explode.
Pull-up reaches the 3{,}000s timeout while the cost model finishes in 443s by keeping the one $\SF$ not pulled up, shrinking the join inputs early.
In summary, pull-up is the better strategy for simple queries with few joins, while the cost model is essential for complex multi-table queries where unfiltered joins cause latency to increase.

\subsection{Ablations}
\label{sec:exp-ablations}

\paragraph{Sensitivity to $\alpha$.}
The parameter $\alpha$ scales the relational cost term in $C_{\mathrm{LLM}} + \alpha \cdot C_{\mathrm{rel}}$.
A large $\alpha$ penalizes relational cost heavily, so the DP algorithm pushes filters down to shrink join inputs at the expense of more LLM calls.
A small $\alpha$ deprioritizes relational cost, so the DP algorithm pulls filters up to minimize LLM calls even if joins grow larger.
\Cref{fig:alpha-sensitivity} shows how varying $\alpha$ affects plan quality on a representative multi-table query.
The cost model is robust across several orders of magnitude.
For $\alpha \in [10^{-3}, 10^{0}]$, the number of LLM calls remain stable at 7--11 and latency stays under 30s.
At small $\alpha$ values from $10^{-4}$ to $10^{-6}$, the DP algorithm increasingly pulls filters up, reducing LLM calls to 7--8 but causing latency to spike above 90s as unfiltered joins dominate execution time.
Our default $\alpha = 10^{-7}$ lies below the plotted range but in the same pull-up-favoring regime; the plan at $10^{-7}$ is identical to the plan at $10^{-6}$.
We use such a small default $\alpha$ because our primary objective is to optimize monetary cost from LLM calls, while still discouraging plans whose relational cost blowup would make the query prohibitively slow:
on simple queries, the resulting plans are near-optimal, while on complex multi-table queries, the DP algorithm still selectively pushes down filters when the relational cost blowup outweighs the LLM savings. 


\paragraph{Sensitivity to selectivity estimates.}
As described in \Cref{sec:implementation}, $N_{u,\SF_i}$ is estimated by multiplying the base table size by $s_i$ at each semantic filter and $s_{\Join}$ at each join along the path.
Since $N_{u,\SF_i}$ directly determines the LLM cost of placing a filter at $u$, errors in these estimates can shift the balance between pulling up and pushing down.
To evaluate robustness, we vary $s_i$ and $s_{\Join}$ independently on a representative multi-table query.
\Cref{fig:selectivity-heatmap} shows cost and time as heatmaps.
Our DP algorithm produces two distinct plans depending on $s_{\Join}$.
When $s_{\Join} \leq 0.05$, the estimated join reduction is large, so the algorithm keeps filters pulled up, yielding $\sim$211 LLM calls costing \$0.07.
When $s_{\Join} \geq 0.1$, pushing down becomes more attractive, increasing calls to $\sim$510 costing \$0.16.
In general, cost and latency remain similar across all $s_i$ and $s_{\Join}$ values within each regime, indicating that the placement is robust to moderate selectivity errors.

\paragraph{Optimizer overhead.}
\textsc{PLOP}'s DP algorithm runs in $\mathcal{O}(|V| \cdot n \cdot 2^n + 3^n)$ time.
\Cref{fig:optimizer-overhead} shows the optimizer latency for queries grouped by the number of $\SF$s at $n = 2, 4, 6, 8$.
We decompose the total optimizer time into the $\SF$ placement, which runs the DP algorithm from \Cref{alg:dp}, and the remaining overhead from $\SP$ pull-up, $\SJ$ decomposition, and plan rewriting.
The $\SF$ placement accounts for about 65\% of the total optimizer time, with the DP algorithm itself taking under 0.08s even for $n = 8$.
The near-constant time across $n = 2$ to $8$ reflects that plan traversal and rewriting dominate; the $3^n$ term becomes significant only for larger $n$.
In practice, $n = 8$ already represents an extreme case: production semantic workloads at Snowflake report that most queries contain fewer than 4 semantic operators~\cite{liskowski2025cortex}.
The total optimizer overhead stays below 2\% of end-to-end query time in all cases, as shown by the annotations in \Cref{fig:optimizer-overhead}.
We report the overhead only for the cost model. As shown in \Cref{alg:pullup}, the pull-up algorithm runs in $\mathcal{O}(n^2 d)$ with negligible cost.

\section{Discussion}
\label{sec:discussion}

\paragraph{Accuracy analysis.}
On the hybrid query benchmark in \Cref{sec:experiments}, both \textsc{PLOP} strategies achieve an average F1 score of around $0.85$, with only a small gap from the unoptimized DuckDB + Cache reference.
Since \textsc{PLOP} only changes operator placement without modifying individual operators, this gap is unlikely to arise from semantic changes introduced by plan rewriting.
A more plausible explanation is LLM non-determinism: evaluating the same semantic predicate in separate executions can yield different outputs on borderline cases.
This interpretation is consistent with our results on SemBench, where human-annotated ground truth is available and \textsc{PLOP} achieves quality on par with DuckDB + Cache.
Together, these results suggest that our placement algorithms do not introduce systematic quality degradation.

\paragraph{Comparison with classical expensive predicate optimization.}
Classical work on expensive predicate placement does not fully address the cost characteristics of semantic operators.
Prior work~\cite{hellerstein1993predicate} introduced predicate migration with function caching under the assumption of linear join cost.
Semantic joins break this assumption: $\SJ$ evaluates each pair of tuples with an LLM call, resulting in $|R| \times |S|$ calls in the worst case, a quadratic cost that the linear model cannot capture.
\textsc{PLOP} does not rely on this assumption; its DP cost model directly estimates LLM and relational costs at each node across hybrid plans using $N_{u,\SF_i}$ and selectivities, handling non-linear cost structures naturally.
Another work~\cite{chaudhuri1999optimization} introduced a DP-based cost model for ordering expensive predicates, but their formulation does not account for function caching.
Function caching is critical for semantic queries as it reduces the effective LLM cost from total rows to distinct rows, fundamentally changing which placements are optimal.
\textsc{PLOP} combines function-caching-aware cost modeling with a DP algorithm that jointly balances LLM and relational costs across multi-table plans.


\section{Related Work}
\label{sec:related}

\paragraph{UDF optimization.}
Hellerstein and Stonebraker~\cite{hellerstein1993predicate} introduced predicate migration and function caching for expensive user-defined predicates, enabling the optimizer to reposition UDFs within query plans under a linear join cost assumption.
Chaudhuri and Shim~\cite{chaudhuri1999optimization} extended this to a DP-based cost model for ordering expensive predicates, but their formulation does not account for function caching.
GRACEFUL~\cite{wehrstein2025graceful} addresses UDF placement with a learned GNN-based cost estimator, achieving significant speedups through informed pull-up and push-down decisions.
Chasialis et al.~\cite{chasialis2025optimizing} study UDF query optimization in SQL data engines with operator fusion and pluggable registration.
Other recent work further improves UDF execution: Arch et al.~\cite{arch2024key} propose outlining UDFs before inlining to expose optimization opportunities, and Franz et al.~\cite{franz2024dear} show that batching UDF invocations outperforms inlining in many cases.
LLM-backed semantic operators violate the assumptions underlying these frameworks: semantic joins can incur quadratic LLM cost in the worst case, while function caching makes cost depend on the number of distinct inputs rather than the total number of rows.
\textsc{PLOP} builds on these ideas by combining function caching~\cite{hellerstein1993predicate} with a dynamic-programming-based optimization algorithm in the spirit of Chaudhuri and Shim~\cite{chaudhuri1999optimization}, while introducing a caching-aware cost model that is absent from prior work.

\paragraph{Semantic query systems.}
Several systems have been proposed to optimize semantic data processing.
Prior work spans operator-level, execution-level, and plan-level optimization.
Lotus~\cite{patel2025semantic} introduces semantic operators and multiple optimization strategies for them.
Palimpzest~\cite{liu2025palimpzest} and Abacus~\cite{russo2025abacuscostbasedoptimizersemantic} focus on cost-based optimization over models, prompts, and alternative implementations of semantic operators.
FlockMTL~\cite{dorbani2025beyond}, ThalamusDB~\cite{jo2024thalamusdb}, and OPHR~\cite{liu2025optimizing} improve execution efficiency through system-level techniques such as batching, approximate query processing, and KV-cache-aware request organization.
ZenDB~\cite{lin2025zendb} and ScaleDoc~\cite{zhang2025scaledoc} target large document collections using structure-aware execution or offline representations with cascaded filtering.
DocETL~\cite{shankar2024docetl} is closest to plan-level optimization, but for unstructured document processing pipelines rather than relational query optimization: it performs logical rewrites, plan evaluation, and search over rewritten pipelines.
For relational semantic queries, iPDB~\cite{kumarasinghe2026ipdb}, Cortex AISQL~\cite{liskowski2025cortex}, and Sema~\cite{qi2026sema} integrate semantic operators into SQL; Sema further supports adaptive reordering and relational-constraint deduction.
Nirvana~\cite{zhu2025beyond} performs LLM-driven search over semantically equivalent plans.
\textsc{PLOP} differs by focusing specifically on semantics-preserving placement of relational semantic filters under function caching.
These directions are complementary: operator-level and execution-level optimizations can be combined with \textsc{PLOP}'s plan-level placement decisions.

\paragraph{LLMs for database systems.}
A separate line of work applies LLMs to improve database systems themselves, spanning text-to-SQL~\cite{hong2024text2sql}, query optimization~\cite{li2024llmr2, tan2025llmqo}, SQL dialect translation~\cite{zhou2025cracksql}, automated tuning~\cite{giannakouris2025lambda, zhou2024dbgpt}, and DBMS testing~\cite{mang2026argus}.
For query optimization, LLM-R2~\cite{li2024llmr2} uses LLMs to guide rule-based query rewriting, while Tan et al.~\cite{tan2025llmqo} explore whether LLMs can serve as query optimizers for relational databases.
GALOIS~\cite{satriani2025logical} takes a different approach by treating the LLM as a data source and introducing logical and physical optimizations for executing SQL queries over LLMs.
These approaches are orthogonal to \textsc{PLOP}: they use LLMs to improve database internals, while we optimize the execution of LLM-backed operators within query plans.
The two approaches are complementary: an LLM-based query optimizer could determine join orders, while \textsc{PLOP} optimizes semantic filter placement for the resulting query plans.

\section{Conclusion}
\label{sec:conclusion}
We presented \textsc{PLOP}, a plan-level optimizer that determines where to place semantic operators in hybrid query plans to minimize the combined LLM and relational execution cost.
\textsc{PLOP} reduces the problem to semantic filter placement, proves that pull-up with function caching minimizes LLM invocations, and uses a DP-based cost model to balance LLM and relational costs when pull-up alone is insufficient.
On 44 queries across five schemas and two benchmarks, \textsc{PLOP} achieves up to $1.5\times$ speedup and $4.3\times$ cost reduction while maintaining the highest accuracy among six evaluated systems.
Compared to existing semantic query systems that modify operator internals, \textsc{PLOP} preserves operator outputs and optimizes placement across the entire plan tree.
\textsc{PLOP}'s plan-level approach is orthogonal to operator-level techniques and can serve as a foundation for cost-aware hybrid query processing as semantic operators become standard in data systems.


\clearpage

\balance
\bibliographystyle{ACM-Reference-Format}
\bibliography{References}

@article{patel2025semantic,
  title={Semantic Operators and Their Optimization: Enabling LLM-Based Data Processing with Accuracy Guarantees in LOTUS},
  author={Patel, Liana and Jha, Siddharth and Pan, Melissa and Gupta, Harshit and Asawa, Parth and Guestrin, Carlos and Zaharia, Matei},
  journal={Proceedings of the VLDB Endowment},
  volume={18},
  number={11},
  pages={4171--4184},
  year={2025},
  publisher={VLDB Endowment}
}

@article{kang2020approximate,
  title={Approximate selection with guarantees using proxies},
  author={Kang, Daniel and Gan, Edward and Bailis, Peter and Hashimoto, Tatsunori and Zaharia, Matei},
  journal={arXiv preprint arXiv:2004.00827},
  year={2020}
}

@inproceedings{liu2025palimpzest,
  title={Palimpzest: Optimizing ai-powered analytics with declarative query processing},
  author={Liu, Chunwei and Russo, Matthew and Cafarella, Michael and Cao, Lei and Chen, Peter Baile and Chen, Zui and Franklin, Michael and Kraska, Tim and Madden, Samuel and Shahout, Rana and others},
  booktitle={Proceedings of the Conference on Innovative Database Research (CIDR)},
  pages={2},
  year={2025}
}

@misc{russo2025abacuscostbasedoptimizersemantic,
      title={Abacus: A Cost-Based Optimizer for Semantic Operator Systems}, 
      author={Matthew Russo and Sivaprasad Sudhir and Gerardo Vitagliano and Chunwei Liu and Tim Kraska and Samuel Madden and Michael Cafarella},
      year={2025},
      eprint={2505.14661},
      archivePrefix={arXiv},
      primaryClass={cs.DB},
      url={https://arxiv.org/abs/2505.14661}, 
}

@article{dorbani2025beyond,
  title={Beyond quacking: Deep integration of language models and RAG into DuckDB},
  author={Dorbani, Anas and Yasser, Sunny and Lin, Jimmy and Mhedhbi, Amine},
  journal={arXiv preprint arXiv:2504.01157},
  year={2025}
}

@article{jo2024thalamusdb,
  title={Thalamusdb: Approximate query processing on multi-modal data},
  author={Jo, Saehan and Trummer, Immanuel},
  journal={Proceedings of the ACM on Management of Data},
  volume={2},
  number={3},
  pages={1--26},
  year={2024},
  publisher={ACM New York, NY, USA}
}

@article{kumarasinghe2026ipdb,
  title={iPDB--Optimizing SQL Queries with ML and LLM Predicates},
  author={Kumarasinghe, Udesh and Liu, Tyler and Liu, Chunwei and Aref, Walid G},
  journal={arXiv preprint arXiv:2601.16432},
  year={2026}
}

@misc{qi2026sema,
  title={Sema: A High-performance System for LLM-based Semantic Query Processing},
  author={Kangkang Qi and Dongyang Xie and Wenbo Li and Hao Zhang and Yuanyuan Zhu and Jeffrey Xu Yu and Kangfei Zhao},
  year={2026},
  eprint={2603.11622},
  archivePrefix={arXiv},
  primaryClass={cs.DB}
}

@article{shankar2024docetl,
  title={Docetl: Agentic query rewriting and evaluation for complex document processing},
  author={Shankar, Shreya and Chambers, Tristan and Shah, Tarak and Parameswaran, Aditya G and Wu, Eugene},
  journal={arXiv preprint arXiv:2410.12189},
  year={2024}
}

@inproceedings{lin2025zendb,
  title={Towards Accurate and Efficient Document Analytics with Large Language Models},
  author={Yiming Lin and Madelon Hulsebos and Ruiying Ma and Shreya Shankar and Sepanta Zeighami and Aditya G. Parameswaran},
  booktitle={Proceedings of the IEEE International Conference on Data Engineering (ICDE)},
  year={2025}
}

@misc{liskowski2025cortex,
  title={Cortex AISQL: A Production SQL Engine for Unstructured Data},
  author={Pawe{\l} Liskowski and Benjamin Han and Paritosh Aggarwal and Bowei Chen and Boxin Jiang and Nitish Jindal and Zihan Li and Aaron Lin and Kyle Schmaus and Jay Tayade and Weicheng Zhao and Anupam Datta and Nathan Wiegand and Dimitris Tsirogiannis},
  year={2025},
  eprint={2511.07663},
  archivePrefix={arXiv},
  primaryClass={cs.DB}
}

@misc{lao2026sembenchbenchmarksemanticquery,
      title={SemBench: A Benchmark for Semantic Query Processing Engines}, 
      author={Jiale Lao and Andreas Zimmerer and Olga Ovcharenko and Tianji Cong and Matthew Russo and Gerardo Vitagliano and Michael Cochez and Fatma Özcan and Gautam Gupta and Thibaud Hottelier and H. V. Jagadish and Kris Kissel and Sebastian Schelter and Andreas Kipf and Immanuel Trummer},
      year={2026},
      eprint={2511.01716},
      archivePrefix={arXiv},
      primaryClass={cs.DB},
      url={https://arxiv.org/abs/2511.01716}, 
}

@inproceedings{raasveldt2019duckdb,
  title={DuckDB: an Embeddable Analytical Database},
  author={Raasveldt, Mark and M{\"u}hleisen, Hannes},
  booktitle={Proceedings of the 2019 International Conference on Management of Data (SIGMOD)},
  pages={1981--1984},
  year={2019},
  publisher={ACM}
}

@mastersthesis{ebergen2023join,
  title={Join Order Optimization with (Almost) No Statistics},
  author={Ebergen, Tom},
  school={Vrije Universiteit Amsterdam},
  year={2022}
}

@inproceedings{hellerstein1993predicate,
  title={Predicate Migration: Optimizing Queries with Expensive Predicates},
  author={Hellerstein, Joseph M. and Stonebraker, Michael},
  booktitle={Proceedings of the 1993 ACM SIGMOD International Conference on Management of Data},
  pages={267--276},
  year={1993},
  publisher={ACM}
}

@article{liu2025optimizing,
  title={Optimizing llm queries in relational data analytics workloads},
  author={Liu, Shu and Biswal, Asim and Kamsetty, Amog and Cheng, Audrey and Schroeder, Luis G and Patel, Liana and Cao, Shiyi and Mo, Xiangxi and Stoica, Ion and Gonzalez, Joseph E and others},
  journal={Proceedings of Machine Learning and Systems},
  volume={7},
  year={2025}
}

@article{chaudhuri1999optimization,
  title={Optimization of Queries with User-Defined Predicates},
  author={Chaudhuri, Surajit and Shim, Kyuseok},
  journal={ACM Transactions on Database Systems},
  volume={24},
  number={2},
  pages={177--228},
  year={1999},
  publisher={ACM}
}

@article{ma2026can,
  title={Can AI Agents Answer Your Data Questions? A Benchmark for Data Agents},
  author={Ma, Ruiying and Shankar, Shreya and Chen, Ruiqi and Lin, Yiming and Zeighami, Sepanta and Ghosh, Rajoshi and Gupta, Abhinav and Gupta, Anushrut and Gopal, Tanmai and Parameswaran, Aditya G},
  journal={arXiv preprint arXiv:2603.20576},
  year={2026}
}

@misc{tpch2024,
  title={{TPC-H} Benchmark Specification},
  author={{Transaction Processing Performance Council}},
  year={2024},
  howpublished={\url{https://www.tpc.org/tpch/}},
  note={Accessed: 2026-03-31}
}

@inproceedings{wehrstein2025graceful,
  title={GRACEFUL: A Learned Cost Estimator for UDFs},
  author={Wehrstein, Johannes and Bang, Tiemo and Heinrich, Roman and Binnig, Carsten},
  booktitle={2025 IEEE 41st International Conference on Data Engineering (ICDE)},
  pages={2450--2463},
  year={2025},
  organization={IEEE}
}

@article{zhu2025beyond,
  title={Beyond Relational: Semantic-Aware Multi-Modal Analytics with LLM-Native Query Optimization},
  author={Zhu, Junhao and Chen, Lu and Ke, Xiangyu and Fang, Ziquan and Li, Tianyi and Gao, Yunjun and Jensen, Christian S},
  journal={arXiv preprint arXiv:2511.19830},
  year={2025}
}

@article{satriani2025logical,
  title={Logical and physical optimizations for sql query execution over large language models},
  author={Satriani, Dario and Veltri, Enzo and Santoro, Donatello and Rosato, Sara and Varriale, Simone and Papotti, Paolo},
  journal={Proceedings of the ACM on Management of Data},
  volume={3},
  number={3},
  pages={1--28},
  year={2025},
  publisher={ACM New York, NY, USA}
}

@article{chasialis2025optimizing,
  title={Optimizing UDF Queries in SQL Data Engines},
  author={Chasialis, Konstantinos and Foufoulas, Yannis and Simitsis, Alkis and Ioannidis, Yannis},
  year={2025}
}

@article{zhang2025scaledoc,
  title={ScaleDoc: Scaling LLM-based Predicates over Large Document Collections},
  author={Zhang, Hengrui and Hui, Yulong and Liu, Yihao and Zhang, Huanchen},
  journal={arXiv preprint arXiv:2509.12610},
  year={2025}
}

@article{mang2026argus,
  title={Automated Discovery of Test Oracles for Database Management Systems Using LLMs},
  author={Mang, Qiuyang and He, Runyuan and Zhong, Suyang and Liu, Xiaoxuan and Zhang, Huanchen and Cheung, Alvin},
  journal={Proceedings of the ACM on Management of Data},
  volume={4},
  number={3},
  pages={1--40},
  year={2026},
  publisher={ACM}
}

@article{giannakouris2025lambda,
  title={$\lambda$-Tune: Harnessing Large Language Models for Automated Database System Tuning},
  author={Giannakouris, Victor and Trummer, Immanuel},
  journal={Proceedings of the ACM on Management of Data},
  volume={3},
  number={1},
  pages={1--26},
  year={2025},
  publisher={ACM}
}

@article{hong2024text2sql,
  title={Next-Generation Database Interfaces: A Survey of LLM-based Text-to-SQL},
  author={Hong, Zijin and Yuan, Zheng and Zhang, Qinggang and Chen, Hao and Dong, Junnan and Huang, Feiran and Huang, Xiao},
  journal={arXiv preprint arXiv:2406.08426},
  year={2024}
}

@article{li2024llmr2,
  title={LLM-R2: A Large Language Model Enhanced Rule-based Rewrite System for Boosting Query Efficiency},
  author={Li, Zhaodongshui and Gao, Haitao and Wang, Huiming and Cong, Gao and Bing, Lidong},
  journal={arXiv preprint arXiv:2404.12872},
  year={2024}
}

@article{tan2025llmqo,
  title={Can Large Language Models Be Query Optimizer for Relational Databases?},
  author={Tan, Jie and Zhao, Kangfei and Rui, Li and Xu, Jeff and Yu, Chengzhi and Piao, Hong and Cheng, Helen and Meng, Deli and Zhao, Yu and Rong, Yu},
  journal={arXiv preprint arXiv:2502.05562},
  year={2025}
}

@article{zhou2025cracksql,
  title={CrackSQL: A Hybrid SQL Dialect Translation System Powered by Large Language Models},
  author={Zhou, Wei and Gao, Yuyang and Zhou, Xuanhe and Li, Guoliang},
  journal={arXiv preprint arXiv:2504.00882},
  year={2025}
}

@article{zhou2024dbgpt,
  title={Db-gpt: Large Language Model Meets Database},
  author={Zhou, Xuanhe and Sun, Zhaoyan and Li, Guoliang},
  journal={Data Science and Engineering},
  volume={9},
  number={1},
  pages={102--111},
  year={2024}
}

@article{arch2024key,
  title={The key to effective udf optimization: Before inlining, first perform outlining},
  author={Arch, Samuel and Liu, Yuchen and Mowry, Todd C and Patel, Jignesh M and Pavlo, Andrew},
  journal={Proceedings of the VLDB Endowment},
  volume={18},
  number={1},
  year={2024},
  publisher={VLDB}
}

@inproceedings{franz2024dear,
  title={Dear User-Defined Functions, Inlining isn't working out so great for us. Let's try batching to make our relationship work. Sincerely, SQL.},
  author={Franz, Kai and Arch, Samuel and Hirn, Denis and Grust, Torsten and Mowry, Todd C and Pavlo, Andrew},
  booktitle={CIDR},
  year={2024}
}

@article{stoian2024dpconv,
  title={DPconv: Super-Polynomially Faster Join Ordering},
  author={Stoian, Mihail and Kipf, Andreas},
  journal={Proceedings of the ACM on Management of Data},
  volume={2},
  number={6},
  pages={1--26},
  year={2024},
  publisher={ACM New York, NY, USA}
}

\end{document}
\endinput